\documentclass[%
reprint,
superscriptaddress,
 amsmath,amssymb,
 aps,
 pra,
]{revtex4-1}

\usepackage{graphicx}
\usepackage{dcolumn}
\usepackage{bm}
\usepackage{amsthm}
\usepackage[dvipsnames]{xcolor}
\usepackage{chngcntr}
\usepackage{longtable} 
\usepackage[hypertexnames=false,colorlinks,linkcolor=blue!80!black,urlcolor=purple!50!black,citecolor=green!50!black]{hyperref}
\usepackage{algorithm}
\usepackage[noend]{algpseudocode}
\usepackage{braket}
\usepackage{dsfont}
\usepackage{booktabs}
\usepackage{multirow}

\usepackage[capitalise]{cleveref}
\crefname{figure}{Fig.}{Fig.}


\newtheorem{theorem}{Theorem}
\newtheorem*{theoremnonum}{Theorem}
\newtheorem{corollary}{Corollary}[theorem]
\newtheorem{lemma}{Lemma}
\theoremstyle{definition}

\newtheorem{criterion}{Criterion}
\theoremstyle{plain}

\theoremstyle{definition}

\theoremstyle{remark}
\newtheorem{remarknote}{Remark}

\DeclareMathOperator*\argmin{argmin}

\def\M{\mathcal{M}}
\def\maxk{{k_\text{max}}}
\def\estunitary{U_{\text{t}}}
\DeclareMathOperator\arctantwo{atan2}
\DeclareMathOperator\Binom{Binom}
\DeclareMathOperator\Norm{Norm}
\def\newterm#1{\emph{#1}}

\def\Plau{\Phi}
\def\Cons{\Lambda}

\begin{document}

\title{Consistency testing for robust phase estimation}

\author{Antonio E. Russo}
\affiliation{Center for Computing Research, Sandia National Laboratories, Albuquerque, NM 87185}

\author{William M. Kirby}
\affiliation{Department of Physics and Astronomy, Tufts University, Medford, MA 02155}

\author{Kenneth M. Rudinger}
\affiliation{Center for Computing Research, Sandia National Laboratories, Albuquerque, NM 87185}

\author{Andrew D. Baczewski}
\affiliation{Center for Computing Research, Sandia National Laboratories, Albuquerque, NM 87185}

\author{Shelby Kimmel}
\affiliation{Department of Computer Science, Middlebury College, Middlebury, VT 05753}

\begin{abstract}
We present an extension to the robust phase estimation protocol, which can identify incorrect results that would otherwise lie outside the expected statistical range.
Robust phase estimation is increasingly a method of choice for applications such as estimating the effective process parameters of noisy hardware, but its robustness is dependent on the noise satisfying certain threshold assumptions.
We provide consistency checks that can indicate when those thresholds have been violated, which can be difficult or impossible to test directly.
We test these consistency checks for several common noise models, and identify two possible checks with high accuracy in locating the point in a robust phase estimation run at which further estimates should not be trusted.
One of these checks may be chosen based on resource availability, or they can be used together in order to provide additional verification.
\end{abstract}

\pacs{Valid PACS appear here}
\maketitle

\section{Introduction}
\label{intro}

The phase estimation algorithm is ubiquitous in quantum computing.
It is common as an algorithmic primitive~\cite{kitaev1995quantum, shor1994algorithms,griffiths1996semiclassical,PhysRevX.6.031007,PhysRevLett.118.100503,da2020optimizing} and is also used for error mitigation \cite{obrien20a} and for estimating the parameters of quantum processes~\cite{dong2007general,Higgins_2009}.
However, error rates in noisy intermediate-scale quantum (NISQ) devices, particularly in state preparation and measurement (SPAM), present a challenge for implementing phase estimation in existing and near-future hardware~\cite{o2016scalable}.
This incentivizes the development of intrinsically error-resilient, or robust, protocols for phase estimation~\cite{Higgins_2009,Helsen19,o2019quantum,PhysRevLett.117.010503}.

Robust phase estimation (RPE) is one such protocol that was originally conceived as a method for characterizing single-qubit gates~\cite{Kimmel2015Dec}.
Recently, RPE implementations have been experimentally demonstrated on trapped-ion qubits~\cite{Rudinger2017May,Meier2019Nov} and used to simulate the ground state and low-lying electronic excitations of a hydrogen molecule on a superconducting cloud-based quantum computer~\cite{Russo2020Jul}.
RPE has Heisenberg scaling, so it is optimally fast up to constant factors.
It is robust to all errors below a certain threshold.
Furthermore, it is easy to implement, as it involves no entangled states, or even any additional registers beyond the register on which the gate of interest acts.

RPE is based on a non-entangled-state version of phase estimation presented in 2009 by Higgins \emph{et al.} \cite{Higgins_2009}.
The focal point of an RPE protocol is a particular unitary gate whose rotation angle is to be estimated.
The protocol involves multiple generations of experiments where this unitary of interest is repeatedly applied in longer and longer sequences.
Roughly, each generation provides an additional bit of precision to the estimate of the phase.
The protocol can tolerate a relatively high degree of inaccuracy at any given round, since future generations serve to correct the accuracy.
This tolerance is what gives the protocol its robustness to a wide range of errors \footnote{As the original proof of correctness in \cite{Kimmel2015Dec} contains a flaw (also previously noted in \cite{belliardo20a}), we reprove the robustness result in a more general form in this paper in \ \cref{app:heisenberg}.}.

The proof of robustness of RPE starts with the assumption that errors do not exceed a certain size, and then shows that under those conditions, by increasing the number of samples by a constant factor we can ensure that the estimates produced will still be accurate, and will still achieve Heisenberg scaling.
The problem with this proof is that the errors we would like RPE to be robust to are themselves often expensive to accurately characterize.
Thus it is difficult to know whether they violate the threshold required for RPE to work correctly, without resorting to costlier characterization techniques.

We address this difficulty in the present work by describing tests of the self-consistency of RPE, which can herald to the user that errors have exceeded their allowed thresholds.
In particular, for several different notions of ``consistency'' we find sets of underlying angles that can explain the RPE measurement data.
Our criteria indicate inconsistency when no such angle exists.
Additionally, using realistic error models, we numerically demonstrate that the tests do a good job of flagging when errors start causing inaccuracies in the RPE estimate.

It is important to note that in this paper we are not attempting to tightly characterize resource use in RPE, which has been the primarily focus of prior work \cite{Higgins_2009,Kimmel2015Dec,belliardo20a}.
Rather, we test whether an instance of an RPE experiment provides an estimate that is trustworthy, given there are likely aspects of the system (e.g. stochastic error processes) that may not be well-characterized, but nevertheless may impact RPE's success.
The tests we develop here are somewhat akin to statistical tests of self-consistency employed in various quantum tomography schemes \cite{blume2017demonstration,langford2013errors, wolk2019distinguishing, mogilevtsev2013cross}.
In those cases, the aim is to perform statistically rigorous testing to see if an estimate appropriately fits the data that generated it.
However, as RPE is not a tomographically complete protocol, it does not generate a fully predictive estimate for a gate set (i.e., one that can predict outcomes for any circuit using only the operations in said gate set), so we cannot simply translate the statistical tests used with tomographic protocols, and instead need to develop a different set of tools.
Nevertheless, the question our tests aim to answer is quite similar to the tomographic consistency tests: given a dataset and some parameter(s) estimated from it, ought we trust those estimates?

We begin by reviewing the RPE protocol in \cref{sec:rpe_review}.
We emphasize the multi-generational nature of RPE, which creates the opportunity for self-consistency tests.
In \cref{sec:criteria}, we define various notions of consistency that can be applied to sequences of choices of estimates across generations.
We first define four increasingly stringent tests that are related to inter-generational constraints, which we call \emph{plausible consistency}, \emph{consecutive consistency}, \emph{local consistency}, and \emph{uniform-local consistency}.
We then define three other types of consistency, which we call \emph{angular-historical consistency}, \emph{probability-historical consistency}, and \emph{intersequence consistency}, which are respectively based on an angular constraint, a statistical constraint, and consistency across different full RPE runs.
These definitions ultimately lead to a series of tests that can be applied to data from an RPE experiment in order to determine its failure point, i.e., the generation at which we should cease to trust the estimates.
We also provide a full reference implementation of the protocol in the Python package \href{https://gitlab.com/quapack/pyRPE}{\texttt{pyRPE}}.

Finally, in \cref{sec:consistency-checks} we numerically test the performance of the consistency checks defined in \cref{sec:criteria}.
We apply depolarizing, dephasing, and amplitude damping noise to simulated RPE experiments for a range of error rates and target angles.
We find that the angular-historical consistency and intersequence consistency checks outperform the other checks across the board, with both of them on average flagging failure within one generation of the actual failure point for typical rotation angles.
These two consistency checks perform similarly, so since the angular-historical consistency check requires only the original set of data, while the intersequence consistency check requires a second RPE run, we suggest using angular-historical consistency as the baseline test, and employing the intersequence consistency test as a double-check if desired.
We also find that the probability-historical consistency check flags failure before actual failure occurs with high probability, providing an option for the experimenter who wants a very conservative estimate of when failure occurs.

\section{Review of the RPE protocol}
\label{sec:rpe_review}

\begin{figure}
\graphicspath{{figures/}}
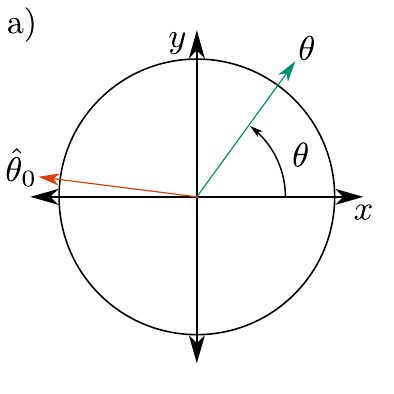
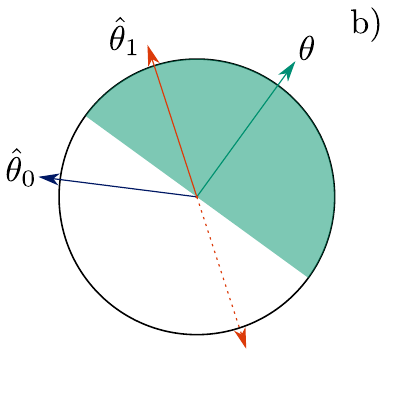\\
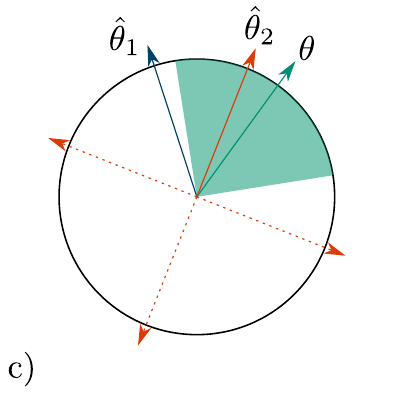
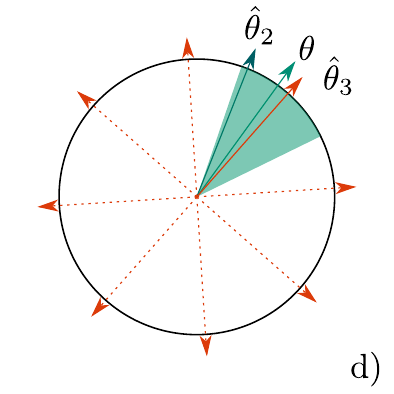
\caption{
Illustration of a successful RPE run.
Red arrows indicate elements of $\Theta_k$, and the green arrow indicates $\theta$, the correct angle to be estimated.
(a) A first measurement $\hat\theta_0$ of $\theta$ is made.
(b) A second measurement of $\theta$ is made.
Two candidate angles (red arrows) are possible, and RPE selects the one closest to the previous value, $\hat\theta_0$ (solid red).
The selected angle is within $\pi/2$ of $\theta$, as expected (within the green shaded region).
(c) and (d) As before, with four and eight candidates, and shading within $\pi/4$ and $\pi/8$ of $\theta$, respectively.
\label{fig:rpe-example-naive-success}}
\end{figure}

RPE is effectively a sequence of Ramsey and Rabi oscillation experiments with logarithmic spacing in the number of repetitions of the unitary under investigation,
\begin{equation}
    \estunitary=\exp[-i\theta\sigma_x/2],
\end{equation}
where $\sigma_x$ is the Pauli X matrix, and $\theta$ is the parameter we would like to learn.
It proceeds across multiple \newterm{generations} of experiments indexed by $k=0,1,2,...$; in the $k$th generation, $\estunitary$ is applied $N_k$ times.
At each generation, RPE produces an estimate $\hat\theta_k$ of the rotation angle $\theta$ of $\estunitary$, by combining data from prior generations.
$N_k$ is chosen such that it increases with each generation, which refines the estimate, as we will see.
In what follows we consider the requirements for implementing RPE for a single-qubit gate, noting that generalization to multi-qubit unitaries is relatively straightforward~\cite{Russo2020Jul}.

The RPE protocol requires the ability to (i) apply $\estunitary$ repeatedly,
and (ii) prepare the states $\ket{0}$ and
\begin{equation}
    \ket{i}\equiv\frac{1}{\sqrt{2}}(\ket{0}+i\ket{1}).
\end{equation}
Using these, we can construct circuits for which the distribution of outcomes encodes $\theta$:
\begin{align}
    P_{\mathrm{c},k} &= |\bra{0} \estunitary^{N_k} \ket{0}|^2
    = \frac{1}{2}\left(1+\cos(N_k \theta)\right), \label{eq:cosine_distribution} \\
    P_{\mathrm{s},k} &= |\bra{i} \estunitary^{N_k} \ket{0}|^2
    = \frac{1}{2}\left(1+\sin(N_k \theta)\right).\label{eq:sine_distribution}
\end{align}

In generation $k$ the circuits represented by \cref{eq:cosine_distribution} and \cref{eq:sine_distribution} are sampled sufficiently many times to generate estimates $\hat{P}_{\mathrm{c},k}$ and $\hat{P}_{\mathrm{s},k}$ of $P_{\mathrm{c},k}$ and $P_{\mathrm{s},k}$, respectively, from the relative frequencies of $0$ and $1$ measurement outcomes.
We do not specify the number of samples that should be taken for each circuit in the protocol, since this has been addressed in previous works \cite{Higgins_2009,belliardo20a}, and our consistency tests are agnostic to the sampling schedule.
The estimates $\hat{P}_{\mathrm{c},k}$ and $\hat{P}_{\mathrm{s},k}$ may be reinserted into \cref{eq:cosine_distribution} and \cref{eq:sine_distribution} to give us a set of candidate estimates of $\theta$ that are compatible with the experimental data,
\begin{equation}
    \Theta_k = \left\{ \tilde\theta\in[0,2\pi)~\middle|~\left(\begin{array}{c}\cos(N_k\tilde\theta)\\ \sin(N_k\tilde\theta)\end{array}\right) = \left(\begin{array}{c}2\hat{P}_{{\mathrm{c},k}}-1\\ 2\hat{P}_{{\mathrm{s},k}}-1\end{array}\right) \right\}.\label{eq:candidate_estimates}
\end{equation}
Henceforth, all angles are implicitly assumed to be defined modulo $2\pi$.
Eq.~\ref{eq:candidate_estimates} can be rewritten as
\begin{multline}
    \Theta_k = \bigg\{ \tilde\theta~\bigg|~\exists n\in\mathbb{Z}~:\\\tilde\theta = \arctantwo\left(2\hat P_{\mathrm{s},k}-1,2\hat P_{\mathrm{c},k}-1\right)/N_k + \frac{2\pi n}{N_k} \bigg\},
    \label{eq:Theta_candidates}
\end{multline}
where $n$ indexes the choice of branch in the branch cut for the arctangent.
If $N_k=1$, this set contains a unique estimate for $\theta$.
More generally for $N_k\geq 1$, there are $N_k$ candidate estimates, one in each angular interval $\left((2n-1)\frac{\pi}{N_k},(2n+1)\frac{\pi}{N_k}\right]$, for $0 \leq n< N_k$.
See the dashed red lines in \cref{fig:rpe-example-naive-success} for a graphical illustration of the angles in this set.

We want to select a single estimate, $\hat{\theta}_k$, from $\Theta_{k}$ at each generation.
The criterion used by RPE is to successively choose the $\hat\theta_k$ that is closest to the previous estimate $\hat\theta_{k-1}$.
If we assume that the initial estimate is unique, i.e., that $N_0=1$, this is possible with probability $1$\footnote{If $\hat\theta_{k-1}$ falls exactly half-way between two members of $\Theta_{k}$, we are faced with an ambiguity.
Because this occurs on a set of measure zero and adds distracting complexity to the discussion, a discussion of this contingency is deferred to \cref{app:derive-criteria}.}.
To determine which angle is closest to the previous estimate we need a branch cut-independent metric for measuring the distance between angles,
\begin{equation}
\label{eq:distance_def_pt}
|\theta'-\theta''|_{2\pi} = \min\left\{\left|\theta'-\theta''+2\pi n\right|~\big| ~n\in\mathbb{Z}\right\}.
\end{equation}
We extend this metric to define the distance from any single angle $\theta'$ to a set of angles $\Theta$,
\begin{equation}
\label{eq:distance_def_set}
    d(\theta',\Theta)=\min_{\tilde\theta\in\Theta}\left|\tilde\theta-\theta'\right|_{2\pi},
\end{equation}
the minimizer of which is
\begin{equation}
\label{eq:minimizer_def}
    \M(\theta',\Theta) = \argmin_{\tilde\theta\in\Theta}\left|\tilde\theta-\theta'\right|_{2\pi}.
\end{equation}
Using these definitions, \cref{alg:rpe-basic} states the RPE protocol.

While the above description of RPE is made in terms of the selection of the closest angle to the previous selection as in \cite{Rudinger2017May}, one can consider a procedure where a set of estimate angles is updated at each generation.
In \cref{app:interval-form}, we address this formulation of RPE and find that it has equivalent error tolerance to the single-angle approach.

Also, as mentioned above, our goal is not to calculate the resources required from first principles, but to determine if the resources actually used in an experiment are sufficient, given uncharacterized noise in the system.
However, any RPE-like protocol can still achieve Heisenberg scaling in the presence of bounded noise when the number of samples increases by a constant factor, as we show in \cref{app:heisenberg} \footnote{A similar analysis of the robustness of RPE can be found in \cite{Kimmel2015Dec}, but as mentioned previously, there is an error in the details of that analysis, so we reprove the result here.}.
Thus for optimal efficiency, we suggest starting with a number of samples as prescribed by Refs.~\cite{Higgins_2009,belliardo20a}, which perform a detailed analysis in the error-free case, and then scaling the number of samples by an amount that you believe will overcome your errors according to the scaling of \cref{app:heisenberg}.
Then use the consistency tests we lay out in the next section to check if the number of samples has been increased sufficiently to overcome the actual errors, or to test at what point in the protocol the noise becomes too large to compensate for.

We suspect that many experimentalists will not actually use the complex sampling schedules suggested in Refs.~\cite{Higgins_2009,belliardo20a}, but rather will take a constant number of samples at each generation, as this schedule achieves near optimal resource efficiency, while being much simpler.
In fact, this is what we do in our own numerical simulations.
Both the consistency tests presented here, as well as the analysis in \cref{app:heisenberg}, can be applied to any sampling schedule.

\begin{figure}\begin{algorithm}[H] 
\caption{Robust Phase Estimation}
\label{alg:rpe-basic}
\begin{algorithmic}[1]
    \item[\textbf{Input:}]
    \\$\{ \Theta_{k'} \}_{k'\leq \maxk}$, the list of candidate estimates for each generation, \cref{eq:Theta_candidates}
    \item[\textbf{Output:}]
    \\$\hat\theta_\maxk$, the estimate for the underlying angle
    \item[\textbf{Preconditions:}]
    \\$N_{k'} < N_{k}$ for $k'<k$
    \\$N_0=1$
    \item[\textbf{Code:}]
\Function{RobustPhaseEstimation}{$\{\Theta_{k'}\}_{k'\leq\maxk}$}
    \State $k\leftarrow 0$
    \State $\hat \theta_0 \leftarrow$ the unique element in $\Theta_0$.
    \While{$k+1\leq\maxk$}
        \State $k\leftarrow k+1$
        \State $\hat\theta_k\leftarrow \M(\hat\theta_{k-1},\Theta_k)$
    \EndWhile
    \State \algorithmicreturn{} $\hat\theta_{\maxk}$
\EndFunction
\end{algorithmic}
\end{algorithm}\end{figure}

\section{The Success and Failure of RPE}
\label{sec:criteria}

Ideally, the value of $\hat{\theta}_k$ closest to $\theta$ would be chosen at each generation.
Then
$\hat\theta_k$ would  estimate $\theta$ with error at most $\pi/N_k$ because the $N_k$ elements of $\Theta_k$ are equally spaced around the unit circle:
\begin{equation}
\label{failure_condition}
    \hat{\theta}_k=\M(\theta,\Theta_k)\quad\Leftrightarrow
    \quad|\hat{\theta}_k-\theta|_{2\pi}<\frac{\pi}{N_k}
\end{equation}
(see \cref{fig:rpe-example-naive-success} for an illustration).
If, for any reason, the RPE procedure selects a value of $\hat\theta_k$ that does not satisfy \eqref{failure_condition} at some generation $k$, we say that the procedure failed at generation $k$.

Generally, we would like to know whether RPE has failed at any given generation.
However, to know this with certainty would require knowledge of $\theta$ (required to evaluate \eqref{failure_condition}), which is the parameter being sought.
Instead, we develop heuristic consistency checks that evaluate some related conditions but are experimentally accessible, in order to approximate the maximum value of $k$ for which \eqref{failure_condition} holds.

Note that in order for RPE to succeed at some generation, it is not necessary in general for it to have succeeded at all previous generations.
This can happen when RPE fails due to a sufficiently large error in one generation, which is corrected by another sufficiently large error in a subsequent generation.
An example illustrating this is shown in \cref{fig:rpe-example-flip-success}.
However, such a success mode is not trustworthy, since it depends on the confluence of two errors, each of which would on its own be sufficient to induce failure.

\begin{figure}
\graphicspath{{figures/}}
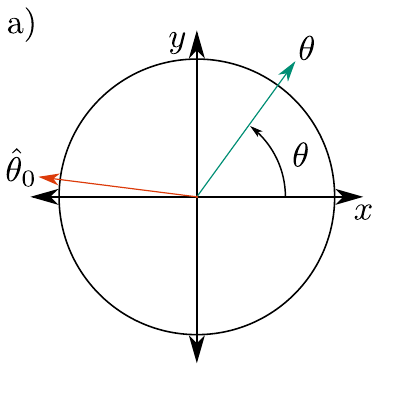
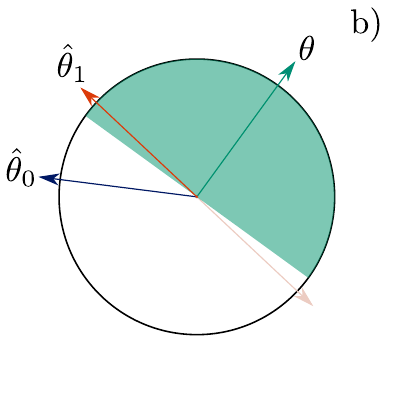\\
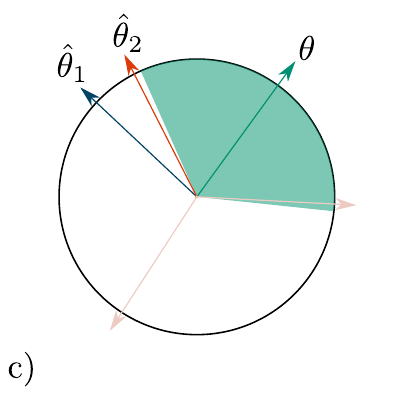
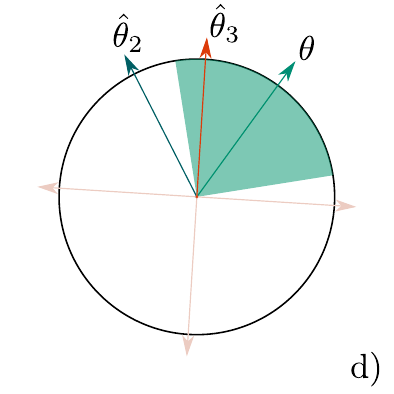
\caption{
Illustration of the action of the RPE protocol that fails at intermediate generations, but succeeds at the final generation.
Red arrows are elements of $\Theta_k$, and the green arrow is the correct angle be estimated, $\theta$.
(a) and (b) As in \cref{fig:rpe-example-naive-success}.
(c) An incorrect choice of $\hat \theta_2$ occurs---the candidate that is closest to $\theta$ was not chosen.
(d) Despite this, at generation $3$, the correct choice of $\hat\theta_3$ is still made.
\label{fig:rpe-example-flip-success}}
\end{figure}

All of the criteria for our checks are based on different notions of \emph{consistency}.
These notions are all based on properties that are satisfied in an ideal RPE run, but might fail to be realized in an RPE run in the presence of noise.
The first five criteria are based on increasingly stringent constraints on the inter-generational consistency of $\hat{\theta}_{k}$.
We define another criterion based on a condition on the inter-generational probability estimates.
Finally, we consider a criterion based on intersequence consistency across different full RPE runs.
These definitions provide a series of tests that can be applied to data from an RPE experiment in order to approximately determine its failure point, i.e., the generation at which \eqref{failure_condition} cease to be satisfied, and we should no longer trust the estimates.

Our first criterion is that there is some ``plausible angle'' $\tilde\theta$ that satisfies \cref{failure_condition} at \emph{every} generation, ruling out the situation in \cref{fig:rpe-example-flip-success}, as well as more typical failure modes like large drift in the estimates.

\begin{criterion}[plausible consistency]
\label{plausible_crit}
Consider the set of angles
\begin{align}
\label{X_k_def}
\Plau_k&=\left\{\tilde{\theta}~\middle|~\hat{\theta}_k=\M(\tilde{\theta},\Theta_k)\right\}\\
&=\left\{\tilde{\theta}~\middle|~|\hat{\theta}_k-\tilde\theta|_{2\pi}=d(\tilde{\theta},\Theta_k)\right\}\\
&=\left\{\tilde\theta~\middle|~|\hat\theta_k-\tilde\theta|_{2\pi}<\pi/N_k \right\},
\end{align}
i.e., angles for which the choice of $\hat\theta_k$ would \emph{not} constitute a failure based on \cref{failure_condition} at generation $k$.
We then check whether such an angle exists in common to all generations, giving us our first criterion:
\begin{equation}
\label{plausible_consistency}
    \bigcap_{k=0}^\maxk \Plau_k \neq \emptyset.
\end{equation}
Notice that there is no dependence on $\theta$ for this criterion, and $\lbrace \Plau_{k} \rbrace_{k\leq \maxk}$ are experimentally derivable quantities.
\end{criterion}

Because $\Plau_k$ are intervals (with size less than $\pi$ for $k>0$), their intersection is an interval, which permits efficient classical testing of membership in the intersection.
For the later criteria, care will have to be taken to ensure that criterion satisfaction is efficiently testable.

\begin{figure}
\graphicspath{{figures/}}
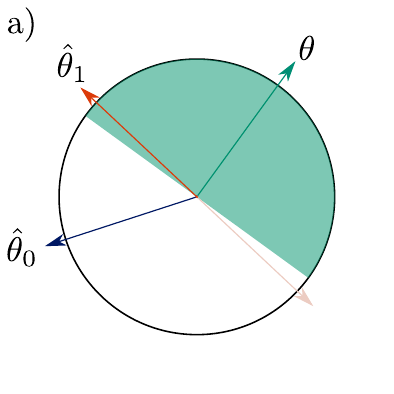
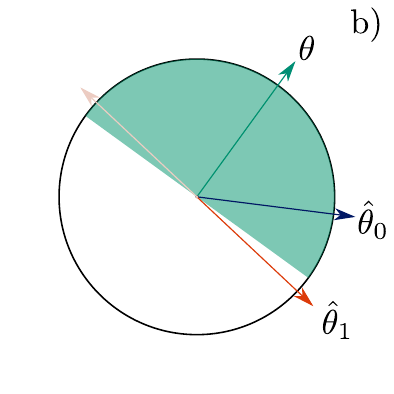
\caption{
Pathology in $\Plau_k$: increasing the accuracy of $\theta_{k-1}$ can make $\theta_k$ \emph{incorrect}.
(a) The correct $\hat\theta_1$ is chosen, despite the large error in $\hat\theta_0$.
(b) Even though $\hat\theta_0$ is closer to the true angle $\theta$ than in (a), it is closer to the incorrect candidate for generation $1$, leading to the incorrect choice of $\hat\theta_1$.
In (a), paradoxically, $\theta_0$ laying further from $\theta$, caused the correct selection of $\theta_1$ for generation $1$.
\label{fig:pathology_Xk}}
\end{figure}

Unfortunately, satisfying \cref{plausible_consistency} does not provide particularly strong guarantees.
The condition is trivially satisfied if $N_k\geq 2N_{k-1}$ for all $k$ (see \cref{app:remark_X_k_subset} in \cref{app:derive-criteria}).
Also, the criterion fails to rule out the paradoxical scenario of \cref{fig:pathology_Xk}, in which \emph{more reliable} data can lead to an incorrect choice of angle.
Our next criterion addresses these two concerns by relying on the distance to the \emph{set} $\Theta_k$, rather than the \emph{point} $\hat\theta_k$.

\begin{criterion}[consecutive consistency]
\label{consec_crit}
Consider the sets $\Cons_k$, where $\Cons_0=[0,2\pi)$ and
\begin{equation}
    \Cons_k=\left\{\tilde{\theta}~\middle|~d(\tilde\theta,\Theta_k)+d(\tilde\theta,\Theta_{k-1})<\frac{\pi}{N_k}\right\}
\end{equation}
for $k=1,2,...$.
An angle $\tilde \theta$ is in $\Cons_k$ if there exist measurements in both $\Theta_k$ and $\Theta_{k-1}$ that are close to $\tilde \theta$.
As before, the corresponding consistency test is whether the intersection of the sets $\Cons_k$ is nonempty:
\begin{equation}
\label{eq:cons_crit}
    \bigcap_{k=0}^\maxk \Cons_k \neq \emptyset.
\end{equation}
\end{criterion}

\cref{consec_crit} is stronger than \cref{plausible_crit}:
\begin{theorem}
\label{well_approx_thm}
The sets $\lbrace \Plau_{k} \rbrace_{k\leq \maxk}$ and $\lbrace \Cons_{k} \rbrace_{k\leq \maxk}$ satisfy:
\begin{equation}
\bigcap_{k'\leq k} \Cons_{k'} \subseteq \bigcap_{k'\leq k} \Plau_{k'}.
\end{equation}
Hence, \cref{eq:cons_crit} implies \cref{plausible_consistency}.
\end{theorem}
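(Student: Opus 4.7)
I would prove the inclusion by induction on $k$. The base case $k = 0$ is immediate: $\Cons_0 = [0, 2\pi)$ by definition, and the precondition $N_0 = 1$ forces $\Theta_0 = \{\hat\theta_0\}$, so $\M(\tilde\theta, \Theta_0) = \hat\theta_0$ for every $\tilde\theta$ and thus $\Plau_0 = [0, 2\pi)$ as well.

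For the inductive step, assume the inclusion holds at generation $k-1$ and take any $\tilde\theta \in \bigcap_{k' \leq k} \Cons_{k'}$. Since that intersection is contained in $\bigcap_{k' \leq k-1} \Cons_{k'}$, the inductive hypothesis yields $\tilde\theta \in \Plau_{k-1}$, i.e. $|\hat\theta_{k-1} - \tilde\theta|_{2\pi} < \pi/N_{k-1}$. Because the elements of $\Theta_{k-1}$ are equally spaced with gap $2\pi/N_{k-1}$, at most one of them can lie strictly within $\pi/N_{k-1}$ of any given point; hence $\hat\theta_{k-1}$ is the unique closest element of $\Theta_{k-1}$ to $\tilde\theta$, and in particular $d(\tilde\theta, \Theta_{k-1}) = |\hat\theta_{k-1} - \tilde\theta|_{2\pi}$.

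Next pick any $\theta^* \in \M(\tilde\theta, \Theta_k)$, so $|\theta^* - \tilde\theta|_{2\pi} = d(\tilde\theta, \Theta_k)$. The triangle inequality for $|\cdot|_{2\pi}$, combined with $\tilde\theta \in \Cons_k$, gives
\begin{equation*}
|\theta^* - \hat\theta_{k-1}|_{2\pi} \leq |\theta^* - \tilde\theta|_{2\pi} + |\tilde\theta - \hat\theta_{k-1}|_{2\pi} = d(\tilde\theta, \Theta_k) + d(\tilde\theta, \Theta_{k-1}) < \pi/N_k.
\end{equation*}
Applying the same uniform-spacing argument to $\Theta_k$, at most one of its elements can lie within $\pi/N_k$ of $\hat\theta_{k-1}$, and by the RPE update rule that unique element is $\hat\theta_k = \M(\hat\theta_{k-1}, \Theta_k)$. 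Therefore $\theta^* = \hat\theta_k$, giving $|\hat\theta_k - \tilde\theta|_{2\pi} = d(\tilde\theta, \Theta_k) < \pi/N_k$, i.e. $\tilde\theta \in \Plau_k$. This closes the induction.

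The main obstacle is the repeated uniqueness step: it relies critically on the strict inequalities built into the definitions of $\Cons_k$ and $\Plau_k$, together with the $2\pi/N_{k'}$ spacing of $\Theta_{k'}$, to pin down a single element. One must carefully thread that strictness through the triangle inequality so the resulting chain $|\theta^* - \hat\theta_{k-1}|_{2\pi} < \pi/N_k$ remains strict and can in turn be fed back into a uniqueness argument for $\Theta_k$. The measure-zero tie cases where two candidate angles are exactly equidistant from a target point are handled by the tie-breaking convention deferred to the paper's appendix and do not affect the argument.
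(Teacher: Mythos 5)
Your proof is correct and follows essentially the same route as the paper's: induction on $k$, using the triangle inequality $|\theta^*-\hat\theta_{k-1}|_{2\pi}\leq d(\tilde\theta,\Theta_k)+d(\tilde\theta,\Theta_{k-1})<\pi/N_k$ together with the $2\pi/N_k$ spacing of $\Theta_k$ to force the minimizer for $\tilde\theta$ to coincide with $\hat\theta_k$. The paper merely makes explicit, as part of the inductive statement, the well-definedness (uniqueness) of $\M(\hat\theta_{k-1},\Theta_k)$, which your strictness argument handles implicitly.
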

\noindent
The proof may be found in \cref{app:derive-criteria}.
Moreover, for a sequence that satisfies \cref{consec_crit} and \cref{failure_condition} for all $k'<k$,
further reduction of $d(\theta,\Theta_{k'})$ will only improve the estimate $\hat\theta_k$ (see \cref{app_cor:well_approx-stable} in \cref{app:derive-criteria}).

Unlike $\Plau_{k}$, $\Cons_{k}$ is \emph{not} an interval (for $k>0$), and testing for membership could introduce exponential classical overhead (since $\Cons_{k}$ is the union of $N_k$ intervals).
Fortunately, $\Cons_{k}\cap \Plau_k$, and hence $\bigcap_{k'\leq k}\Cons_{k'}$, \emph{is} an interval:
\begin{equation}
\Cons_{k}\cap\Plau_k = \left\{ \theta ~\middle|~ d\left(\theta,(\hat\theta_k,\hat\theta_{k-1})\right)< D_k\right\},
\end{equation}
where $D_{k}=\frac{\pi}{2N_{k}}-\frac{1}{2}\left|\hat\theta_{k}-\hat\theta_{k-1}\right|_{2\pi}$.
The interval is, equivalently, the smallest angular interval containing $\hat\theta_{k}$ and $\hat\theta_{k-1}$, expanded by $D_{k}$ on both sides.
Refer to \cref{app:consecutive_consistent_interval} in \cref{app:derive-criteria} and the \href{https://gitlab.com/quapack/pyRPE}{reference Python implementation} for further details.

\begin{table*}
\begin{tabular}{m{0.3cm}m{2cm}m{6cm}m{3.1cm}m{0.1cm}m{5.452cm}}
\multicolumn{6}{c}{Criteria of fundamental value}\\
\toprule[0.9pt]
\# & Description & Set & Criterion && Interval \\[-0.8pt]\midrule[0.1pt]\\[-7pt]

1& plausible & $\Plau_k= \left\{\tilde\theta~\middle|~\hat{\theta}_k=\M(\tilde{\theta},\Theta_k)\right\}$
&$\bigcap_{k\leq \maxk} \Plau_k\neq \emptyset$
&&$\bigcap_{k'\leq k} \left\{\tilde{\theta}~\middle|~|\tilde\theta-\hat\theta_{k'}|_{2\pi}<\frac{\pi}{N_{k'}}\right\}$\\\\[-1.75pt]

2& consecutive & $\Cons_k= \left\{\tilde\theta~\middle|~d(\tilde\theta,\Theta_k)+d(\tilde\theta,\Theta_{k-1})<\frac{\pi}{N_k}\right\}$
&$\bigcap_{k\leq \maxk} \Cons_k\neq \emptyset$
&&$\bigcap_{k'\leq k} \left\{ \theta ~\middle|~ d\left(\theta,(\hat\theta_k,\hat\theta_{k-1})\right)< D_k\right\}$ \\
&&&&&\multicolumn{1}{r}{$D_{k'}=\frac{\pi}{2N_{k'}}-\frac{1}{2}|\hat\theta_{k'}-\hat\theta_{k'-1}|_{2\pi}$}\\\\[-1.75pt]

3& local & $\Delta_k[\delta\theta_k]= \left\{\tilde\theta~\middle|~d(\tilde \theta,\Theta_k) < \frac{\delta\theta_k}{N_k}\right\}$
&$\bigcap_{k\leq \maxk} \Delta_k[\delta\theta_k]\neq \emptyset$
&&$\bigcap_{k'\leq k} \left\{\tilde{\theta}~\middle|~|\tilde\theta-\hat\theta_{k'}|_{2\pi}<\frac{\delta\theta_{k'}}{N_{k'}}\right\}$\\
&&\multicolumn{1}{l}{s.t. $\frac{\delta\theta_k}{N_k} +\frac{\delta\theta_{k-1}}{N_{k-1}} \leq \frac{\pi}{N_k}$}\\
\bottomrule[0.9pt]\\
\end{tabular}
\def\extratabspace{\vspace{0pt}}
\begin{tabular}{m{0.3cm}m{2cm}m{4.5cm}m{0.1cm}m{10.2cm}}
\multicolumn{5}{c}{Criteria useful as tests}\\
\toprule[0.9pt]
\# & Description & \multicolumn{3}{c}{Criterion} \\[-0.8pt]\midrule[0.1pt]\\[-7pt]

4& uniform local & & &$\bigcap_{k\leq \maxk} \Delta_k[\pi(1 +N_k/N_{k-1})]\neq\emptyset$ \\\\[-1.75pt]

5& \parbox[c]{2cm}{\raggedright angular-historical} & \multicolumn{1}{r}{$\forall k\leq \maxk\leq \maxk$} && $\hat\theta_{k} \in \bigcap_{k'\leq k} \Delta_{k'}[\delta\theta_{k'}]$\\\\[-1.75pt]

6& \parbox[c]{2cm}{\raggedright probability-historical} & \multicolumn{1}{r}{$\forall k'<k\leq\maxk$} &&$\max\left\{|\sin N_{k'}\hat\theta_{k'}-\sin N_{k}\hat\theta_k|,|\cos N_{k'}\hat\theta_{k'}-\cos N_{k}\hat\theta_k|\right\} \leq \frac{\sin(\delta\theta_k)}{\sqrt{2}}$\\\\[-1.75pt]

7& intersequence  & \multicolumn{1}{r}{$\forall k<\maxk$}&& $|\hat\theta_k-\hat\theta'_k|_{2\pi} \leq \frac{2\pi}{N_k}$\\
\bottomrule[0.9pt]
\end{tabular}

\caption{
Overview of consistency checks of RPE success.
The criteria in the first table (of ``fundamental value'') are used to characterize basic convergence properties of RPE, and include Criteria 1, 2, and (to a lesser extent) 3.
Notice they form a logical hierarchy, with later criteria strictly stronger than earlier ones.
The second table lists the remainder of the criteria discussed, which we find have value for empirically testing the validity of RPE results.
There is no straightforward logical hierarchy among these final 4 criteria.
\label{table:tests}
}
\end{table*}

\cref{consec_crit} is based on balancing errors on adjacent generations.
While adaptive protocols, such as \cite{Grinko2019Dec}, include detection and adjustment for errors between generations, standard RPE is non-adaptive and makes no such adjustments.
Thus a criterion that allows a small error at one generation to compensate for a larger error on another, independent of the pre-determined number of samples to take at each generation, should raise suspicion.
This motivates a third, still stronger criterion which forbids balancing errors across subsequent generations.

\begin{criterion}[local consistency]
\label{crit:delta_theta-consistency}
Consider a set of angular error bounds on the sequence, $\lbrace \delta\theta_{k} \rbrace_{k\leq \maxk}$, which should satisfy \cref{crit:delta_theta-approx} but may otherwise may be freely chosen.
At generation $k$, the set of angles within those bounds is
\begin{equation}\label{eq:delta_theta_set}
\Delta_k[\delta\theta_{k}] = \left\{\tilde \theta~\middle|~d(\tilde \theta,\Theta_k) < \frac{\delta\theta_k}{N_k}\right\}.
\end{equation}
We say that a sequence is ($\delta\theta_{k}$)-locally-consistent if the intersection across all generations is non-empty:
\begin{equation}\label{eq:delta_theta_set_int}
    \bigcap_{k\leq \maxk} \Delta_{k}[\delta\theta_k] \neq \emptyset.
\end{equation}
\end{criterion}

Local consistency is stronger than \cref{consec_crit} if
\begin{equation}\label{crit:delta_theta-approx}
\frac{\delta\theta_k}{N_k} +\frac{\delta\theta_{k-1}}{N_{k-1}} \leq \frac{\pi}{N_k},
\end{equation}
since in that case we have
\begin{equation}\label{eq:cond_sequence}
\bigcap_{k'\leq k} \Delta_{k'}[\delta\theta_{k'}]\subseteq \bigcap_{k'\leq k} \Cons_{k'} \subseteq \bigcap_{k'\leq k} \Plau_{k'}.
\end{equation}
Like $\Cons_k$, $\Delta_k$ is not necessarily an interval, but it is if the set inclusion \cref{eq:cond_sequence} holds (see \cref{app:plau_def2}), so we demand that \cref{crit:delta_theta-approx} be satisfied.
In this case, an interval formulation of this criterion follows directly from \cref{eq:delta_theta_set}:
\begin{equation}
\bigcap_{k'\leq k} \Delta_{k'}[\delta\theta_{k'}] = \bigcap_{k'\leq k} \left\{\tilde{\theta}~\middle|~|\tilde\theta-\hat\theta_{k'}|_{2\pi}<\frac{\delta\theta_{k'}}{N_{k'}}\right\}
\label{criterion:uniform_consistency-interval}
\end{equation}

\begin{criterion}[uniform-local consistency]
\label{crit:unif-approx-consitency}
From \cref{crit:delta_theta-consistency}, we can obtain a natural special case in which the $\delta\theta_k$ have a fixed dependence on the length of the sequences between subsequent generations, while satisfying  \cref{crit:delta_theta-approx}:
\begin{equation}
\label{delta_theta_k}
\delta\theta_k = \frac{\pi}{1 +\frac{N_k}{N_{k-1}}}.
\end{equation}
\end{criterion}
This balances the error tolerance between generations, so it is a natural choice when the expected error on each generation is the same, which could occur for example when SPAM error is independent of generation and dominates error associated with implementing $U_\mathrm{t}^{N_k}$.
Note that for the standard RPE case in which $N_k=2^k$ \cite{Kimmel2015Dec},
\begin{equation}
\delta\theta_k = \frac{\pi}{3}.
\end{equation}

\cref{crit:delta_theta-consistency} (including the special case \cref{crit:unif-approx-consitency}) leads to a bound on the errors in the estimates of the probabilities of \cref{eq:cosine_distribution} and \cref{eq:sine_distribution}.
A straightforward geometrical argument~--- found in, for example, \cite{Higgins_2009} (the ``simple geometry'' leading to Eq.~2) or \cite{berg2019practical} (Theorem 4.1)~--- shows that
\begin{equation}
\max\{|\hat P_{\mathrm{c},k}-P_{\mathrm{c},k}|,|\hat P_{\mathrm{s},k}-P_{\mathrm{s},k}|\}=\Delta P \leq\frac{\sin(\delta\theta_k)}{2\sqrt{2}}.
\label{eq:criterion-prob_bound}
\end{equation}
In particular, for the uniform case and $N_k=2^k$, we obtain the bound
\begin{equation}
\Delta P_k \leq \sqrt{\frac{3}{32}} \sim 30.6\%.
\end{equation}
which corrects the numerical value given in \cite{Kimmel2015Dec}.

The previous tests are based on the existence of an angle $\tilde\theta$ that is consistent with $\{\Theta_k\}_{k\leq \maxk}$, but do not indicate whether the final output $\hat\theta_k$ is such a witness.
The following criterion asserts precisely this.
\begin{criterion}[angular-historical consistency]
\label{unif_approx_pi_over_3}
For all $k\leq \maxk$,
\begin{equation}
\label{angular_interval_consistency_eq}
\hat\theta_k \in \bigcap_{k'\leq k} \Delta_{k'}[\delta\theta_{k'}].
\end{equation}
\end{criterion}
If this condition holds, the value of $\hat\theta_\maxk$ is not just the terminating measurement in a reasonable RPE measurement sequence, it is also one of the putative underlying angles.
In \cref{thm:ang_hist_equiv}, we show that, if $\delta\theta_{k'-1}/N_{k'-1} > \delta\theta_{k'}/N_{k'}$ for all $k'\leq k$ (which holds for $N_k=2^k$), \eqref{angular_interval_consistency_eq} is equivalent to requiring that the intersection has length greater than $\delta\theta_k/N_k$ for all $k\leq\maxk$:
\begin{equation}
\label{angular_interval_consistency_eq_alt}
\left|\bigcap_{k'\leq k} \Delta_{k'}[\delta\theta_{k'}]\right|>\frac{L}{N_k},
\end{equation}
where $L=\delta\theta_k$.
One might imagine fine-tuning the interval length, $L$, to optimize the performance of the consistency test in identifying the actual failure point.
In \cref{sec:consistency-checks}, we provide numerical evidence that although such fine-tuning may be possible, it is too sensitive to the error model and the value of the actual angle $\theta$ to provide a consistent advantage over using $L=\delta\theta_k$.

The previous five criteria form a hierarchy of consistency checks that are increasingly stringent.
We also consider two other criteria that do not strictly fit into this hierarchy.
The first is to directly test the probability condition \cref{eq:criterion-prob_bound} for each estimate:
\begin{criterion}[probability-historical consistency]
For all $k'<k\leq\maxk$,
\begin{equation}
|\sin N_{k'}\hat\theta_{k'}-\sin N_{k}\hat\theta_k| \leq \frac{\sin(\delta\theta_k)}{\sqrt{2}},
\end{equation}
and
\begin{equation}
|\cos N_{k'}\hat\theta_{k'}-\cos N_{k}\hat\theta_k| \leq \frac{\sin(\delta\theta_k)}{\sqrt{2}}.
\end{equation}
\end{criterion}
In other words, probability-historical consistency tests that the probabilities expected from each estimated phase (from \cref{eq:cosine_distribution} and \cref{eq:sine_distribution}) are consistent with the probabilities obtained in previous generations, up to the $\delta\theta_k$ bounds as given in \cref{eq:criterion-prob_bound}.
Because this test is expressed in terms of probabilities rather than angles, it will turn out to be overly pessimistic in the presence of incoherent noise (see \cref{sec:consistency-checks}), but it does provide a conservative estimate of the failure point.

Our final test is to compare the results of RPE originating from different sequences of $N_k$:
\begin{criterion}[intersequence consistency]
For two sequences of RPE estimates $\hat\theta_k$ and $\hat\theta_k'$ with sequences $N_k<N'_k$, check that
\begin{equation}
|\hat\theta_k-\hat\theta_k'|_{2\pi} \leq \frac{2\pi}{N_k}
\label{interseq_condition}
\end{equation}
for all $k\leq \maxk$.
\end{criterion}
Instead of looking at the single original sequence $N_k$, and checking for self-consistency, we consider a second sequence $N'_k$, and check that the resulting estimates are consistent with those of the original sequence.
Notice first that if \cref{interseq_condition} fails for some generation $k$, then either
\begin{equation}
    |\hat\theta_k-\theta|_{2\pi}>\frac{\pi}{N_k}
\end{equation}
or
\begin{equation}
    |\hat\theta'_k-\theta|_{2\pi}>\frac{\pi}{N_k}>\frac{\pi}{N'_k},
\end{equation}
i.e., at least one of the sequences does not satisfy the true condition for correctness, \cref{failure_condition}.
Using the notation $\Plau_k$ (and $\Plau'_k$, respectively) of \cref{X_k_def}, the intersequence consistency condition \cref{interseq_condition} is equivalent to
\begin{equation}
    \Plau_k\cap \Plau'_k\neq\emptyset.
\end{equation}
In other words, the intersequence consistency check tests whether there exist any plausible estimates that are consistent with both sequences.
In \cref{sec:consistency-checks} we find that this approach can provide a very good test of data, but it does of course require an extra sequence's worth of additional experimental data.

We thus have a range of consistency checks that we can use to gain information about the performance of an RPE run.
The consistency checks are summarized in \cref{table:tests}.
In the next section, we test and compare the performance of these consistency checks by simulating noisy RPE runs.

\section{Numerical Performance of Self-consistent Criteria}
\label{sec:consistency-checks}

\begin{figure*}
\centering
\includegraphics{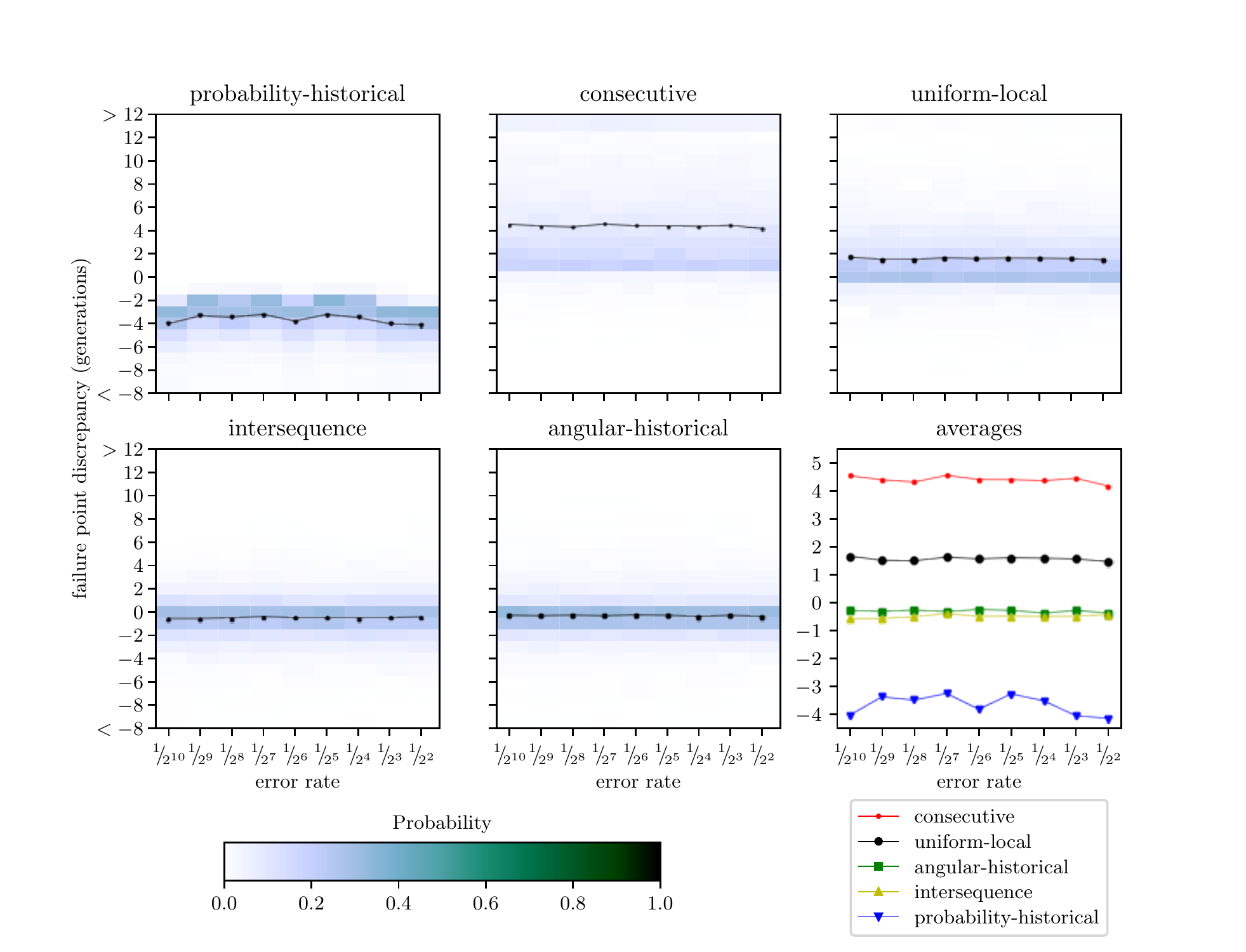}
\caption{
For varying depolarization error rates and actual angle $\theta=1.6$, plots the discrepancies between the generations at which each consistency check flagged failure, and the generations at which failure actually occurred.
Positive values indicate that the consistency check flagged failure \emph{after} failure actually occurred.
1000 RPE runs were performed per error rate.
Bin colors show the proportion of the runs at that error rate they contain.
The curves show averages for each error rate.
The lower right plot shows the average curves from all of the other plots, for comparison.
\label{depol_error_discrepancies}
}
\end{figure*}

\begin{figure*}
\centering
\includegraphics{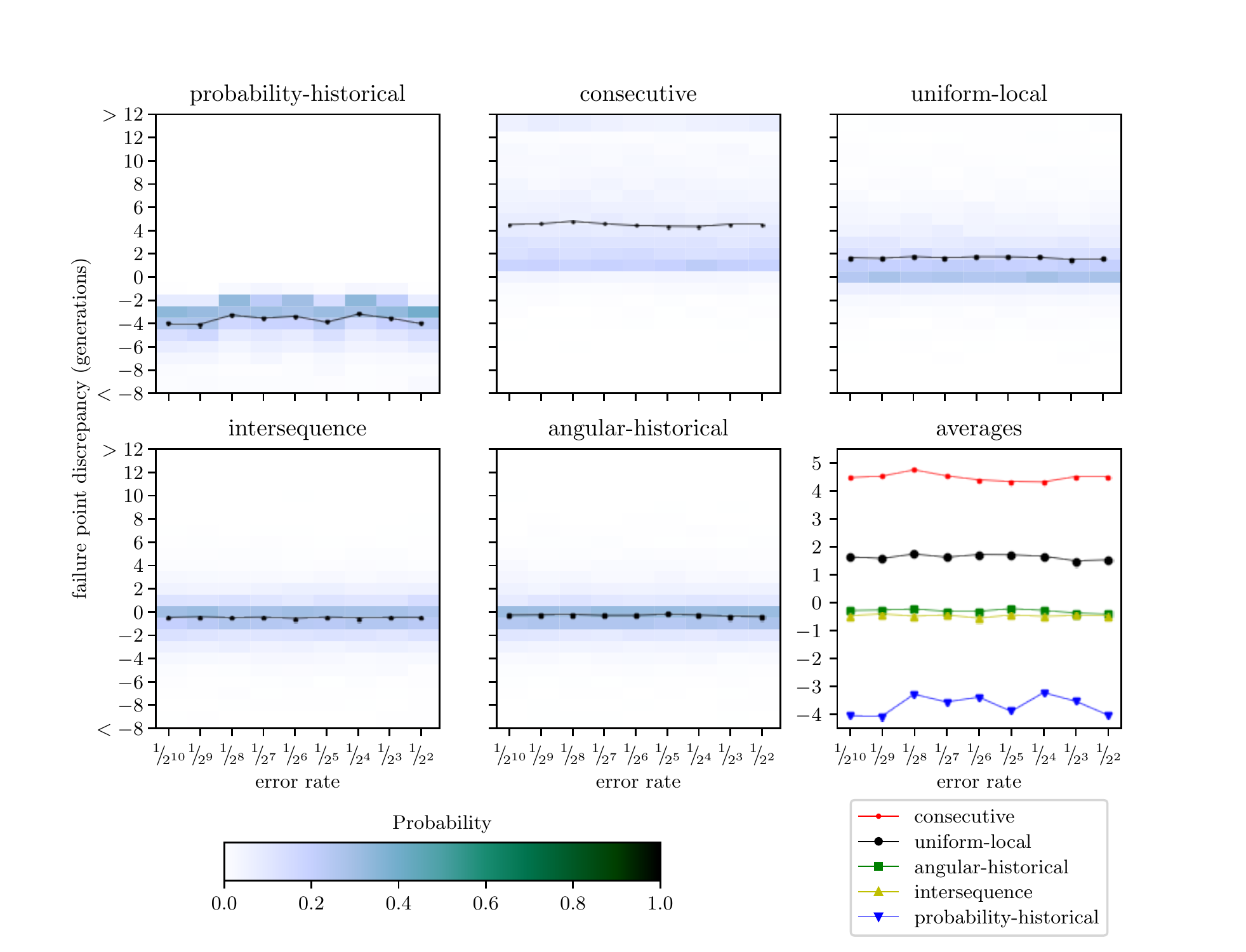}
\caption{
For varying dephasing error rates and actual angle $\theta=1.6$, plots the discrepancies between the generations at which each consistency check flagged failure, and the generations at which failure actually occurred.
Positive values indicate that the consistency check flagged failure \emph{after} failure actually occurred.
1000 RPE runs were performed per error rate.
Bin colors show the proportion of the runs at that error rate they contain.
The curves show averages for each error rate.
The lower right plot shows the average curves from all of the other plots, for comparison.
\label{dephas_error_discrepancies}
}
\end{figure*}

\begin{figure*}
\centering
\includegraphics{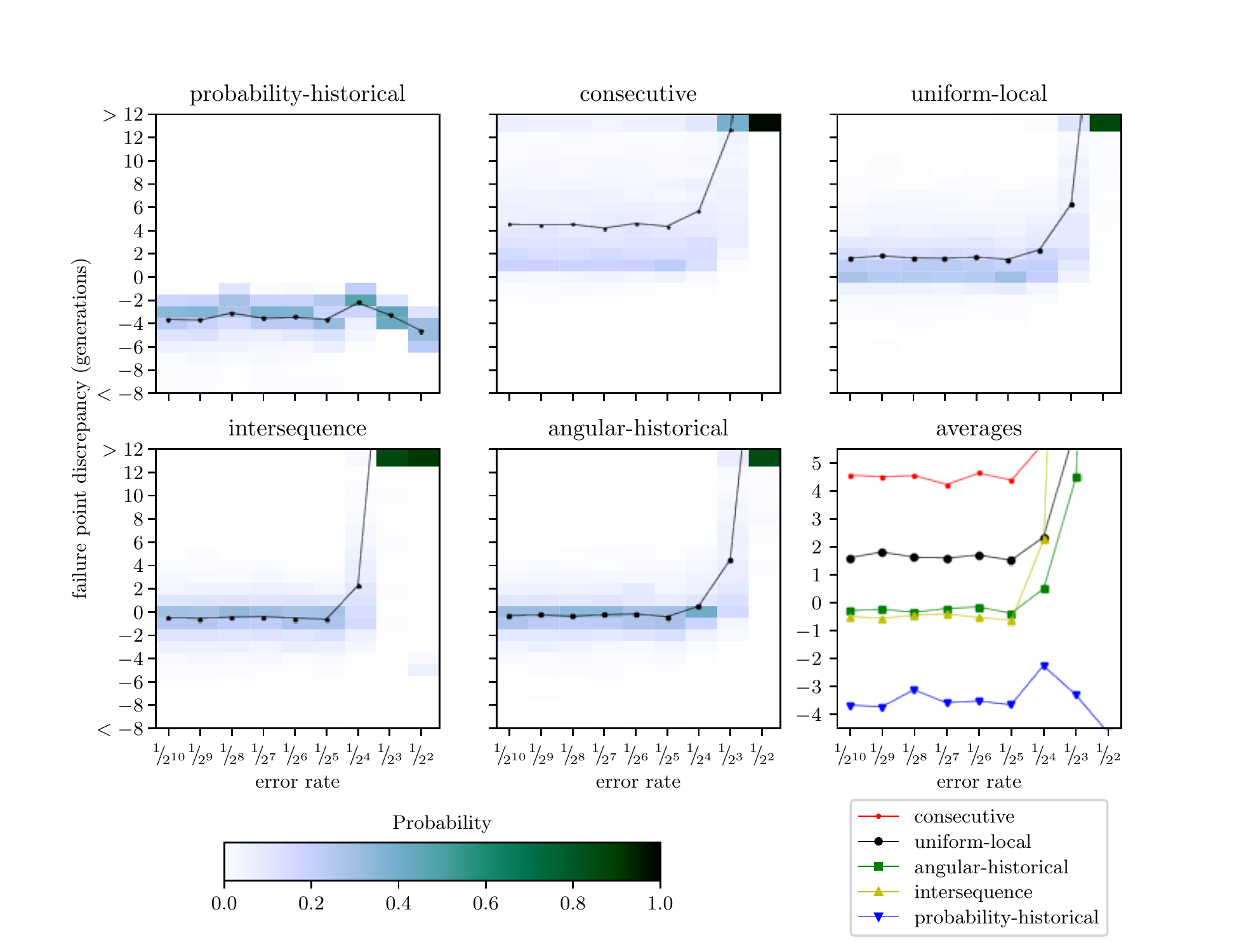}
\caption{
For varying amplitude damping error rates and actual angle $\theta=1.6$, plots the discrepancies between the generations at which each consistency check flagged failure, and the generations at which failure actually occurred.
Positive values indicate that the consistency check flagged failure \emph{after} failure actually occurred.
1000 RPE runs were performed per error rate.
Bin colors show the proportion of the runs at that error rate they contain.
The curves show averages for each error rate.
The lower right plot shows the average curves from all of the other plots, for comparison.
Very strong amplitude damping ($b\geq 1/16$) creates a false signal that RPE will track, as indicated by the residue of failures at generation $>12$.
This skews the average failure points in those generations towards the maximum probed failure generation of $45$, complicating a direct analysis of the average.
\label{amp_damp_error_discrepancies}
}
\end{figure*}

In the previous section, we defined several heuristic consistency tests for an RPE experiment.
In this section, we evaluate the performance of these tests by numerically simulating RPE runs with depolarizing, dephasing, and amplitude damping noise.

When one of our consistency tests fails, it flags a generation $k\in \{0,1,2,...,\maxk\}$ at which the RPE estimate becomes unreliable.
Because we are performing a numerical simulation for a particular target angle $\theta$, the heuristic's failure generation can be compared to the actual failure generation in which RPE is no longer able to correctly estimate $\theta$ (i.e., when it fails the condition \eqref{failure_condition}).

We can represent a single-qubit mixed state $\rho$ as a Pauli vector:
\begin{equation}
\label{rho_pauli}
    \rho=\mathds{1}+x\sigma_x+y\sigma_y+z\sigma_z
    \sim
    \begin{pmatrix}
        1\\x\\y\\z
    \end{pmatrix},
\end{equation}
for $x,y,z\in\mathbb{R}$ satisfying $x^2+y^2+z^2\le1$, and Pauli matrices $\sigma_x,\sigma_y,\sigma_z$.
Quantum operations are then implemented as superoperators that act on $\rho$ in the vector representation \eqref{rho_pauli}.

We assume that $\estunitary$ is a rotation about $\sigma_x$ by some angle $\theta$, the parameter we wish to estimate.
We represent $\estunitary$ as a superoperator:
\begin{equation}
\label{ideal_U}
    \estunitary=
    R_x(\theta)=
    \begin{pmatrix}
        1&0&0&0\\
        0&1&0&0\\
        0&0&\cos(\theta)&-\sin(\theta)\\
        0&0&\sin(\theta)&\cos(\theta)
    \end{pmatrix}.
\end{equation}
Let $\ket{0}$ and $\ket{1}$ denote the $\sigma_z$ eigenstates with eigenvalues $+1$ and $-1$, respectively.
Ideally, the initial state is $\ket{0}$:
\begin{equation}
    \rho_\text{init}=\ket{0}\bra{0}
    \sim
    \begin{pmatrix}
        1\\0\\0\\1
    \end{pmatrix},
\end{equation}
and the measurements of the cosine and sine strings are ideally of the states
\begin{equation}
    \rho_\text{c}=\ket{0}\bra{0}
    \sim
    \begin{pmatrix}
        1\\0\\0\\1
    \end{pmatrix},\quad\text{and}\quad
    \rho_\text{s}=\ket{i}\bra{i}
    \sim
    \begin{pmatrix}
        1\\0\\1\\0
    \end{pmatrix}.
\end{equation}
Denoting the realistic noisy (rather than ideal) forms of these quantities with tildes, the probabilities of the cosine and sine measurements are
\begin{equation}
    \left(\begin{array}{c}P_{\mathrm{c},k}\\P_{\mathrm{s},k}\end{array}\right)
    =\left(\begin{array}{c} \tilde\rho^\dag_\text{c} \tilde U_k \tilde\rho_\text{init}
    \\\tilde\rho^\dag_\text{s} \tilde U_k \tilde \rho_\text{init}\end{array}\right).
\end{equation}
We reparameterize these probabilities as
\begin{equation}
    \left(\begin{array}{c}2P_{\mathrm{c},k}-1\\2P_{\mathrm{s},k}-1\end{array}\right)
    =\left(\begin{array}{c}\lambda\cos\hat\phi\\\lambda\sin\hat\phi\end{array}\right),
\label{eqn:noisy_probability_form}
\end{equation}
where $\hat\phi$ is the maximum likelihood estimate of the angle $N_k\theta$, and $\lambda$ renormalizes the measurement counts $M\mapsto M\lambda^2$, as shown in \cref{app:sample_complexity}.
The renormalization occurs when a protocol makes use of only $\hat\phi$ information, so RPE and the consistency checks---with the exception of the probability-historical consistency check---all have this scaling behavior.
In order to simulate depolarization, dephasing, or amplitude damping, after each application of the unitary $\estunitary$ of \cref{ideal_U} we apply an error operator $V$ defined by the error type and the error rate $b$, i.e., $\tilde U_k=(V\estunitary)^{N_k}$.

For depolarization, the superoperator is
\begin{equation}
    V_\text{depol.}(b)=
    \begin{pmatrix}
        1&0&0&0\\
        0&1-b&0&0\\
        0&0&1-b&0\\
        0&0&0&1-b
    \end{pmatrix}.
\end{equation}
Notice that this superoperator commutes with $\estunitary$, leading to $\lambda=(1-b)^{N_k}$, while leaving $\hat\phi=N_k\theta$ unaffected.
The behavior in \cref{depol_error_discrepancies} as function of error rate is therefore equivalent to rescaled finite sample noise, so our consistency checks are also capable of catching failures due to too few shots.

Dephasing in the $\sigma_x\sigma_y$-plane\footnote{We choose to simulate dephasing noise along the $\sigma_x$- and $\sigma_y$-axes because, under the action of $\estunitary$, the state remains in the $\sigma_y\sigma_z$-plane, so dephasing along $\sigma_y$ and $\sigma_z$ would simply look like depolarization.} results in the superoperator:
\begin{equation}
    V_\text{depha.}(b)=
    \begin{pmatrix}
        1&0&0&0\\
        0&1-b&0&0\\
        0&0&1-b&0\\
        0&0&0&1
    \end{pmatrix}.
\end{equation}
In contrast to depolarizing noise, dephasing noise results in a nontrivial $\hat\phi(k)$ and $\lambda(k)$ that we do not attempt to characterize analytically.
Nevertheless, the performance of the consistency checks is still qualitatively the same, as seen in \cref{dephas_error_discrepancies}.

Finally, we simulate an amplitude damping channel, with decay from $\ket{1}$ to $\ket{0}$.
This is described by the superoperator
\begin{equation}
    V_\text{a.d.}(b)=
    \begin{pmatrix}
        1&0&0&0\\
        0&\sqrt{1-b}&0&0\\
        0&0&\sqrt{1-b}&0\\
        b&0&0&1-b
    \end{pmatrix}.
\end{equation}
Unlike the previous examples, the presence of the $b$ term in the lower-left corner of the matrix---corresponding to the relaxation to the state $\ket{0}$---acts to drive the system to a particular steady state by adding a finite term to the $z$ component of the Bloch vector at every application of $\tilde U$.
The other terms in the noise model act as damping, so for sufficiently large $N_k$, the system evolves towards a particular fixed $\lambda(\infty)$ and $\hat\phi(\infty)$, which we again do not attempt to determine analytically.
Notice that $\lambda$ increases with $b$ in this scenario, because $b$ serves as the amplitude of the driving term added at every generation.
The overall result is that, if the amplitude damping is strong enough, the statistical noise will become irrelevant at high generations, and RPE will begin to track the strong signal for $\hat\phi(\infty)$.
Hence any consistency checks that depends on $\hat\phi$ information only will never flag a failure, because without information about the true angle there is no way to tell that the false signal $\hat\phi(\infty)$ is incorrect.
This results in a residue in \cref{amp_damp_error_discrepancies} for failure discrepancies greater than $12$ for strong amplitude damping (i.e., the consistency checks flag failure more than 12 generations after failure actually occurs).
Notice that the probability-historical consistency check does detect the error because it is directly sensitive to the length of the Bloch vector $\lambda$.
For this reason, in general the probability-historical consistency test is more pessimistic than the other tests (i.e., it flags failure earlier).
In the case of strong amplitude damping, this pessimism is justified, but we suggest caution, since other error models may not cause $\lambda$ to shrink sufficiently to flag a failure.
Note that if there is reason to believe that a small $\lambda$ is an indication of overall infidelity of the system, one might also consider directly checking the magnitude of $\lambda$.

\begin{figure*}
\centering
\includegraphics{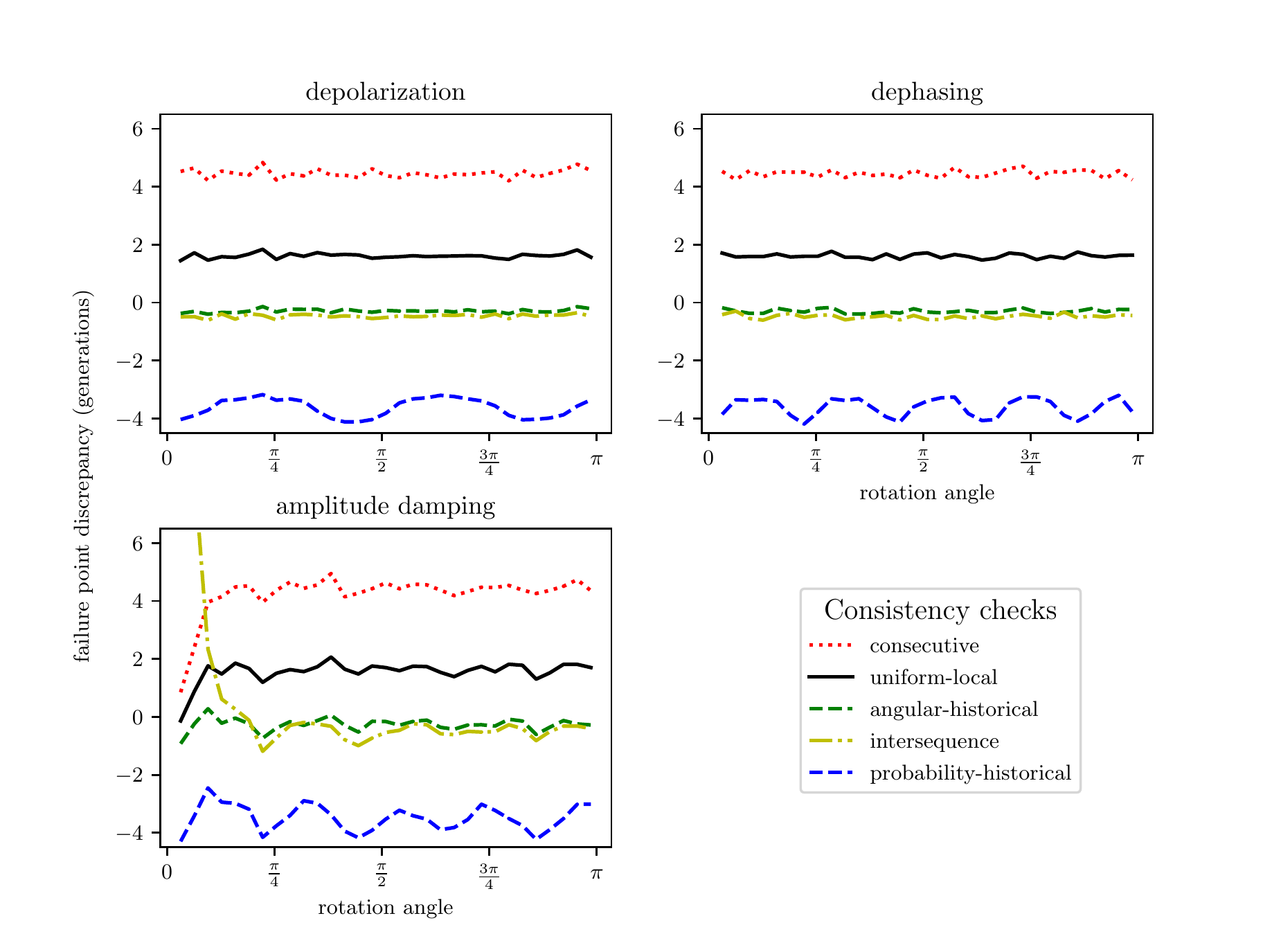}
\caption{
Dependence of the failure point discrepancies on actual rotation angle for simulated depolarization, dephasing, and amplitude damping at the rate $b=2^{-6}$.
Each data point is an average of 1000 RPE runs.
The most notable feature is that all consistency checks have near constant failure point discrepancies for all $\theta$, with the exception of substantial deviations for small $\theta$ under amplitude damping.
\label{check_vs_angle}
}
\end{figure*}

In all cases, we simulated RPE runs with each error model for exponentially spaced error rates
\begin{equation}
    b=2^{-i}\quad\text{for}\quad i=2,\ldots,10.
\end{equation}
Additionally, SPAM error was introduced by setting \begin{equation}
    \tilde\rho_\text{init.}=V_\text{depol.}(b_\text{SPAM})\rho_\text{init.},
\end{equation}
and
\begin{equation}
    \tilde\rho_\text{c}=V_\text{depol.}(b_\text{SPAM})\rho_\text{c}.
\end{equation}
Because many implementations of RPE will use an additional gate to implement $\rho_\text{s}$, we injected error due to an imperfect $\pi/2$ rotation:
\begin{equation}
\tilde\rho_\text{s}=V_\text{depol.}(b_\text{SPAM})V_\text{depol.}(b_\text{s})R_x(b_\text{s})\rho_\text{s}.
\end{equation}
We choose and present results for a fixed value of $b_\text{SPAM}=b_\text{s}=10^{-2}$, but comment that the results are qualitatively the same with both set to zero, $b_\text{SPAM}=b_\text{s}=0$.
For each error rate and error type, we simulated 1000 runs of the RPE procedure, taking $M=1000$ samples of the measurement outcomes at each generation, and repeating for a variety of angles $\theta$.
The results for
\begin{equation}
    \theta=1.6
\end{equation}
are shown in \cref{depol_error_discrepancies,dephas_error_discrepancies,amp_damp_error_discrepancies}.
The primary generation sequence in all cases was $N_k=2^k$, and for the intersequence consistency check the second sequence was $N_0=2$, $N_i=3(2^{i-1})$ for $i=1,2,3,...$ (with an initial generation at $N_{-1}=1$ that was not compared to the primary sequence).
Each plot in these figures compares the generation at which the heuristic consistency check flagged failure to the actual failure point of the run (as determined by \eqref{failure_condition}).
In particular, we subtract the generation number where failure actually occurred from the generation number that was flagged by the consistency check.
Thus positive values in \cref{depol_error_discrepancies,dephas_error_discrepancies,amp_damp_error_discrepancies} indicate that the actual failure occurred \emph{before} failure was flagged by the given consistency check.
Data was collected out to 45 generations (and $46$ for the intersequence alternative test), and if no failure was detected, a failure point at the following generation was recorded.
This choice can only affect the calculation of average failure point.
In addition, this only becomes an issue for amplitude damping with strong error rates $b\geq 1/16$.

The results in \cref{depol_error_discrepancies,dephas_error_discrepancies,amp_damp_error_discrepancies} show that the angular-historical consistency check is on average the closest to the actual failure point, in all the cases we studied.
(The angular-historical consistency check is found in the center of the bottom row of each figure.)
Close behind it is the intersequence consistency check (which appears on the left of the bottom row of each figure).
We therefore suggest that if one simply desires to estimate as accurately as possible the actual failure point, one should use the angular-historical consistency check.
If one wants an additional verification layer for the resulting failure generations, one could compare these results to those of the intersequence consistency check (which, recall, requires taking a second set of data).
Since the angular-historical and intersequence tests perform similarly for most cases of the error models studied in this paper, finding a large difference between them in an experiment would indicate that the underlying error model is outside the regimes investigated here, or is in one of the pathological cases for amplitude damping.

If one instead wants to obtain a conservative estimate of the failure point, i.e., an estimate that precedes the actual failure point with high probability, one should use the check for probability-historical consistency (which appears in the top left corner of each figure).
As we can see from \cref{depol_error_discrepancies,dephas_error_discrepancies,amp_damp_error_discrepancies}, in every run that we simulated, the check for probability-historical consistency flagged failure before failure actually occurred.

The results in \cref{depol_error_discrepancies,dephas_error_discrepancies,amp_damp_error_discrepancies} are only for $\theta=1.6$, as noted above.
However, their qualitative features appear to hold for almost any $\theta$, the exception being small $\theta$ under amplitude damping.
In particular, we see in \cref{check_vs_angle} that at the error rate $b=2^{-6}$, the angular-historical consistency check and the intersequence consistency check are the closest to correct on average, and the probability-historical consistency check flags failure early.

\begin{figure}
\centering
\includegraphics{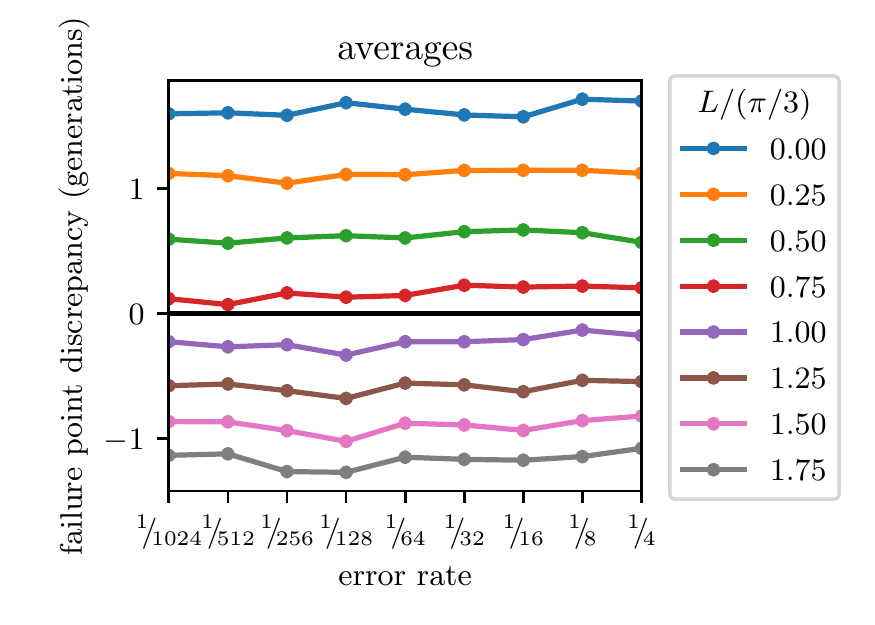}
\caption{
For varying depolarization error rates and actual angle $\theta=1.6$, plots the failure discrepancies for the angular-historical consistency check for various minimum interval widths ($L$, as in \eqref{angular_interval_consistency_eq_alt}).
Positive values indicate that the consistency check flagged failure \emph{after} failure actually occurred.
Each point is averaged over 1000 runs.
\label{depol_error_disc_min_interval}
}
\end{figure}

Testing angular-historical consistency in the case $N_k=2^k$ amounts to checking that each intersection \eqref{angular_interval_consistency_eq_alt} has size at least $\frac{\pi}{3N_k}$, i.e.,
\begin{equation}
    L=\delta\theta_k=\frac{\pi}{3}
\end{equation}
(by \eqref{delta_theta_k}).
The fact that \eqref{angular_interval_consistency_eq_alt} is equivalent to \eqref{angular_interval_consistency_eq} for $L=\delta\theta_k$ suggests that this value of $L$ should provide good performance of the consistency check, which is supported by Figs.~\ref{depol_error_discrepancies}, \ref{dephas_error_discrepancies}, and \ref{amp_damp_error_discrepancies}.

However, as discussed in the paragraph following \eqref{angular_interval_consistency_eq_alt}, we could in principle build a heuristic consistency check around any value of $L$ we like, rather than $L=\delta\theta_k$.
As for any heuristic consistency check, performance will depend on the specific error model.
Consequently, we tested the angular-historical consistency check for a variety of interval widths centered around $L=\delta\theta_k=\pi/3$; an example of the results is shown in \cref{depol_error_disc_min_interval}.
This and the plots for other actual angles show that although the angular-historical consistency check with interval width $L=\pi/3$ (as defined in \cref{unif_approx_pi_over_3}) is close to optimal, it does flag failure early on average, so it might be possible to numerically fine-tune the interval width in order to obtain a more accurate check.
However, doing so is sensitive to the specific error model and error rate as well as the actual angle, so the utility of this approach is probably limited, and we instead chose to stick to the theoretically-motivated width of $L=\pi/3$.

\ \section{Conclusion\label{sec:conclusion}}

In this work we provided a framework for characterizing the consistency of RPE data based on a variety of efficiently classically verifiable criteria.
The implementation of such consistency checks will allow an experimenter to address the worry that, due to systematic errors, their RPE run may have violated the assumptions that guarantee the protocol's performance.
Such a violation might result in the protocol returning a dramatically incorrect estimate of the value of the desired parameter.
We described seven such checks, and tested them numerically under simulated depolarization, dephasing, and amplitude damping, identifying the angular-interval-consistency check as the most accurate in all cases.
This provides a tool that augments the standard RPE protocol by permitting detection of unknown errors with unknown rates, which would otherwise cause hidden failure of the standard RPE protocol.

\begin{acknowledgments}
W. M. K. acknowledges support from the NSF, Grant No. DGE-1842474, and the NSF STAQ project, Grant No. PHY-1818914.
This work was supported in part by the U.S. Department of Energy, Office of Science, Office of Advanced Scientific Computing Research, Quantum Algorithms Team and Quantum Computing Applications Team programs.
Sandia National Laboratories is a multi-mission laboratory managed and operated by National Technology and Engineering Solutions of Sandia, LLC, a wholly owned subsidiary of Honeywell International, Inc., for DOE's National Nuclear Security Administration under contract DE-NA0003525.
\end{acknowledgments}

\bibliography{references}

\clearpage\widetext\appendix

\section{Derivation of Criteria}
\label[appendix]{app:derive-criteria}
This appendix provides proofs for some of the results in \cref{sec:criteria}, as well as some additional details.
We will extensively use the distance $d$ from an angle to a set of angles, as well as the minimizer $\M$ of $d$, so we repeat their definitions \cref{eq:distance_def_pt} and \cref{eq:minimizer_def} here.
First, we restate the point distance on the unit circle,
\begin{equation}
|\theta'-\theta''|_{2\pi} = \min\left\{\left|\theta'-\theta''+2\pi n\right|~\big| ~n\in\mathbb{Z}\right\}.
\end{equation}
which induces a set distance,
\begin{equation}
\label{eq:distance_def_set_app}
d(\theta',\Theta)=\min_{\tilde\theta\in\Theta}\left|\tilde\theta-\theta'\right|_{2\pi}.
\end{equation}
Which has the minimizer
\begin{equation}
\label{eq:minimizer_def_app}
\M(\theta',\Theta) = \argmin_{\tilde\theta\in\Theta}\left|\tilde\theta-\theta'\right|_{2\pi}.
\end{equation}
We will additionally use a version of the triangle inequality appropriate for point-to-set distances.
\begin{theoremnonum}[Set Triangle Inequality]
$d(\theta,\Theta)\leq |\theta-\theta'|_{2\pi}+d(\theta',\Theta)$
\end{theoremnonum}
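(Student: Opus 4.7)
The plan is to reduce this point-to-set triangle inequality to the ordinary triangle inequality for the point metric $|\cdot-\cdot|_{2\pi}$. First I would verify that $|\cdot-\cdot|_{2\pi}$, as defined in \cref{eq:distance_def_pt}, really is a metric on the circle; this follows because it is the standard quotient metric on $\mathbb{R}/2\pi\mathbb{Z}$, and in particular it satisfies the usual triangle inequality $|a-c|_{2\pi}\leq |a-b|_{2\pi}+|b-c|_{2\pi}$ (if one wants to be fully explicit, pick integers $m,n$ achieving the minima defining $|a-b|_{2\pi}$ and $|b-c|_{2\pi}$, then the integer $m+n$ is a valid choice for $|a-c|_{2\pi}$, and one applies the ordinary triangle inequality on $\mathbb{R}$).

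Next I would pick a minimizer $\tilde\theta^{*}\in\M(\theta',\Theta)$, which exists whenever $\Theta$ is nonempty and closed (and in the paper's setting $\Theta_k$ is always a finite set of size $N_k$, so existence is immediate). By definition of $\M$ and $d$, we have $|\tilde\theta^{*}-\theta'|_{2\pi}=d(\theta',\Theta)$. Applying the point triangle inequality to the triple $(\theta,\theta',\tilde\theta^{*})$ gives
\begin{equation}
|\tilde\theta^{*}-\theta|_{2\pi}\leq |\theta-\theta'|_{2\pi}+|\tilde\theta^{*}-\theta'|_{2\pi}=|\theta-\theta'|_{2\pi}+d(\theta',\Theta).
\end{equation}

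Finally, since $\tilde\theta^{*}\in\Theta$, the definition \cref{eq:distance_def_set_app} of $d$ as a minimum yields $d(\theta,\Theta)\leq |\tilde\theta^{*}-\theta|_{2\pi}$. Chaining these two inequalities produces the desired bound $d(\theta,\Theta)\leq |\theta-\theta'|_{2\pi}+d(\theta',\Theta)$.

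There is essentially no significant obstacle here; the only subtle point is confirming the triangle inequality for the quotient metric, and ensuring that a minimizer exists so that the inequality $d(\theta',\Theta)=|\tilde\theta^{*}-\theta'|_{2\pi}$ is attained rather than merely approached. For infinite $\Theta$ one could instead pick a near-minimizer within $\varepsilon$ of the infimum and then let $\varepsilon\to 0$, but since the RPE sets $\Theta_k$ are finite this refinement is unnecessary.
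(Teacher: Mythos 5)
Your proof is correct and is essentially the same one-line argument as the paper's: the paper bounds the minimum over $\tilde\theta\in\Theta$ term-by-term using the point triangle inequality and pulls out the constant $|\theta-\theta'|_{2\pi}$, whereas you instantiate at the specific minimizer for $\theta'$ and then bound $d(\theta,\Theta)$ by the distance to that point. The extra care you take about the metric property of $|\cdot|_{2\pi}$ and the existence of a minimizer is sound but not needed beyond what the paper assumes.
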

\begin{proof}
\ \\
\begin{equation}
d(\theta,\Theta)=\min_{\tilde\theta\in\Theta}\left|\tilde\theta-\theta\right|_{2\pi}\leq \min_{\tilde\theta\in\Theta}\left(\left|\tilde\theta-\theta'\right|_{2\pi}+\left|\theta'-\theta\right|_{2\pi}\right)=\left|\theta'-\theta\right|_{2\pi}+\min_{\tilde\theta\in\Theta}\left|\tilde\theta-\theta'\right|_{2\pi}=\left|\theta'-\theta\right|_{2\pi}+d(\theta',\Theta)
\end{equation}
\end{proof}
\subsection{Plausible Consistency}
Recall our definition in \cref{X_k_def} of the plausible consistent angles at each generation $k$,
\begin{equation}
\label{eq:plau_def_app}
\Plau_k=\left\{\tilde{\theta}~\middle|~\hat{\theta}_k=\M(\tilde{\theta},\Theta_k)\right\}.
\end{equation}
We call $\Plau_k$ the plausible angles for generation $k$ because the estimate $\hat\theta_k$ chosen at generation $k$ is ``correct" (as defined by \cref{failure_condition}) if and only if the actual angle is in $\Plau_k$.
Therefore, if the actual angle were any $\tilde\theta \in \bigcap_{k\leq \maxk} \Plau_k$, then the entire sequence of estimates $\{\hat\theta_{k}\}_{k\leq\maxk}$ would be correct.
Hence, the corresponding consistency check is
\begin{equation}
\label{eq:plaus_crit_app}
    \bigcap_{k=0}^\maxk \Plau_k \neq \emptyset.
\end{equation}
\cref{eq:plaus_crit_app} is the same as \cref{plausible_consistency}, in the main text.
\begin{remarknote}\label{app:plau_def2}
The plausible angles at generation $k$ are exactly those angles for which distance to the set $\Theta_k$ is the same as the distance to the RPE-chosen angle $\hat\theta_k\in\Theta_k$,
\begin{equation}
\Plau_k=\left\{\tilde{\theta}~\middle|~|\hat{\theta}_k-\tilde\theta|_{2\pi}=d(\tilde{\theta},\Theta_k)\right\}.
\end{equation}
\end{remarknote}
\begin{remarknote}
The angles within $\pi/N_k$ of the RPE-chosen angle $\hat\theta_k$ are the plausible angles at generation $k$,
\begin{equation}
\Plau_k=\left\{\tilde{\theta}~\middle|~|\tilde\theta-\hat\theta_{k}|_{2\pi}<\frac{\pi}{N_{k'}}\right\}
\end{equation}
\end{remarknote}
We have used intervals that are open on both the left and right sides because we have not chosen a convention to use, e.g., the angle that is closer to $0$ when restricted to the principal range $[0,2\pi)$.
While this introduces another failure mode to the analysis, albeit a low-probability one, we will see in \cref{well_approx_thm} that this failure mode is ruled out by consecutive-consistency.

\begin{remarknote}\label{app:remark_X_k_subset}
If the plausible consistency check \cref{eq:plaus_crit_app} is satisfied as $\maxk\to\infty$, we are able to fully resolve the angle.
In other words, if there exists a $\tilde\theta\in\bigcap_{k<\infty} \Plau_k$, then $\tilde\theta$ is unique and $\lim_{k\to\infty}\hat\theta_k\to\tilde\theta$.
\end{remarknote}

However, as noted in the main text, the implications of \cref{eq:plaus_crit_app} for finitely many generations of data are limited in some cases.
\begin{remarknote}
If $N_{k}\ge2N_{k-1}$ for all $k$, then $\Plau_{k}\subset \Plau_{k-1}$.
Equivalently, $\bigcap_{k\le \maxk} \Plau_{k}=\Plau_\maxk$, and the consistency criterion \cref{eq:plaus_crit_app} is always satisfied.
\end{remarknote}
\begin{proof}
Assume that $N_k\ge2N_{k-1}$, and let $\tilde\theta\in \Plau_k$.
We prove that $\tilde\theta\in\Plau_{k-1}$.
First, notice that
\begin{equation}
\label{remark2_step1}
    |\hat\theta_{k-1}-\hat\theta_k|_{2\pi} \leq \pi/N_k,
\end{equation}
since $\hat\theta_k$ is chosen to be the element of $\Theta_k$ that is closest to $\hat\theta_{k-1}$, and $\Theta_k$ is made up of $N_k$ equally-spaced angles in $[0,2\pi)$.
Then, using the triangle inequality,
\begin{equation}
\begin{split}
    |\hat{\theta}_{k-1}-\tilde\theta|_{2\pi}&\leq
    |\hat\theta_{k-1}-\hat\theta_k|_{2\pi}+|\hat\theta_k-\tilde\theta|_{2\pi}\\
    &\leq \frac{\pi}{N_k}+\frac{\pi}{N_k}\\
    &= \frac{\pi}{N_k/2}\leq\frac{\pi}{N_{k-1}},
\end{split}
\end{equation}
where the second line follows from \cref{remark2_step1} and from the definition \cref{eq:plau_def_app} of $\Plau_k$.
Hence by \cref{eq:plau_def_app}, $\tilde\theta\in\Plau_{k-1}$, and thus $\Plau_k\subset\Plau_{k-1}$.
\end{proof}

\subsection{Consecutive Consistency}
Recall the definition of consecutive consistency in \cref{consec_crit} (\cref{eq:cons_crit}, in the main text),
\begin{equation}
\label{eq:cons_crit_app}
    \bigcap_{k=0}^\maxk \Cons_k \neq \emptyset,
\end{equation}
where
\begin{equation}
\label{eq:cons_def_app}
    \Cons_k=\left\{\tilde{\theta}~\middle|~d(\tilde\theta,\Theta_k)+d(\tilde\theta,\Theta_{k-1})<\frac{\pi}{N_k}\right\}
\end{equation}
for $k\in \{1,2,\ldots,\maxk\}$, and $\Cons_0=[0,2\pi)$.
Notice first that consecutive consistency does not make reference to the RPE-chosen angles $\hat\theta_k$.
That \cref{consec_crit} (\cref{eq:cons_crit_app}) is stronger than \cref{plausible_crit} (\cref{eq:plaus_crit_app}) is proven in the following theorem:

\textbf{Theorem \ref{well_approx_thm}.}
\emph{
The sets $\lbrace \Plau_{k} \rbrace_{k\leq \maxk}$ and $\lbrace \Cons_{k} \rbrace_{k\leq \maxk}$ satisfy:
\begin{equation}
\bigcap_{k'\leq k} \Cons_{k'} \subseteq \bigcap_{k'\leq k} \Plau_{k'}.
\end{equation}
Hence, \cref{eq:cons_crit} implies \cref{plausible_consistency}.
}
\begin{proof}
We will need to show along the way that each $\hat\theta_k=\M(\hat\theta_{k-1},\Theta_{k})$ is well-defined---this may fail to be the case if the choice is between two candidate angles that are equidistant to the previous $\hat\theta_k$ (i.e., $d(\hat\theta_{k-1},\Theta_k)=\pi/N_k$).
We therefore proceed by induction on $k$ for the following statement: there is a unique minimizer $\M(\hat\theta_{k-1},\Theta_{k})$ (provided $k>0$), and
\begin{equation}
\label{eq:induction_assumption}
\bigcap_{k'\le k} \Cons_{k'} \subseteq \bigcap_{k'\le k} \Plau_{k'}.
\end{equation}
The base case is $k=0$, where \cref{eq:induction_assumption} becomes $\Cons_{0}\subseteq\Plau_{0}$, which follows directly from the definitions \cref{eq:plau_def_app} and \cref{eq:cons_def_app}.

For the induction step, assume that for some $k$ the inductive hypothesis (i.e., \cref{eq:induction_assumption} and uniqueness of the minimizer) holds.
Then for any $\tilde\theta\in \bigcap_{k'\leq k+1} \Cons_{k'}$ we have $\tilde\theta\in \Plau_{k}$, and (by \cref{app:plau_def2}) $|\hat\theta_{k}-\tilde\theta|_{2\pi} = d(\tilde\theta,\Theta_{k})$.
Therefore by the triangle inequality,
\begin{equation}
\label{eq:intermediate_step}
d(\hat\theta_{k},\Theta_{k+1})\leq |\hat\theta_{k}-\tilde\theta|_{2\pi}+d(\tilde\theta,\Theta_{k+1}) = d(\tilde\theta,\Theta_{k})+d(\tilde\theta,\Theta_{k+1}) < \frac{\pi}{N_{k+1}},
\end{equation}
where the final inequality follows from the definition of $\Lambda_{k+1}$.
Because the final inequality in \cref{eq:intermediate_step} is strict, the minimizer $\M(\hat\theta_{k},\Theta_{k+1})$ is unique.
Next, set
\begin{equation}
B=\bigcap_{k'\le k} \Cons_{k'} \subseteq \bigcap_{k'\le k} \Plau_{k'}.
\end{equation}
We will show that
\begin{equation}
B\cap \Cons_{k+1} \subseteq B\cap \Plau_{k+1}.
\end{equation}
Let $\tilde\theta\in B\cap \Cons_{k+1}$.
Then we have
\begin{equation}
\label{eq:final_step}
\begin{split}
\left| \M(\tilde\theta,\Theta_{k+1})-\hat\theta_{k}\right|_{2\pi} &\leq \left|\M(\tilde\theta,\Theta_{k+1})-\tilde\theta\right|_{2\pi} + \left|\tilde\theta-\hat\theta_{k}\right|_{2\pi}\\
& = \left|\M(\tilde\theta,\Theta_{k+1})-\tilde\theta\right|_{2\pi} + \left|\tilde\theta-\M(\tilde\theta,\Theta_{k})\right|_{2\pi}\\
&= d(\tilde\theta,\Theta_{k+1})+d(\tilde\theta,\Theta_{k})\\
&< \frac{\pi}{N_{k+1}},
\end{split}
\end{equation}
where the first line follows by the triangle inequality, the second line follows because $\tilde\theta\in\Plau_{k}$ and thus $\hat\theta_{k}=\M(\tilde\theta,\Theta_{k})$, the third line follows by the definition \cref{eq:distance_def_set_app} of the distance $d$, and the final inequality follows because $\tilde\theta\in\Cons_{k+1}$.
Therefore, since the elements of $\Theta_{k+1}$ are separated by $2\pi/N_{k+1}$, by \cref{eq:final_step} $\M(\tilde\theta,\Theta_{k+1})$ must be the closest element in $\Theta_{k+1}$ to $\hat\theta_k$.
But that closest element is $\hat\theta_{k+1}$, by definition, so $\hat\theta_{k+1}=\M(\tilde\theta,\Theta_{k+1})$, i.e., $\tilde\theta\in \Plau_{k+1}$.
\end{proof}

\begin{corollary}
\label{app_cor:well_approx-stable}
If $\tilde\theta\in \bigcap_{k'\leq k} \Cons_{k'}$, and $\left\{ \Theta'_{k'}\right\}_{k'\leq\maxk}$ are another set of measurement data satisfying
\begin{equation}
 d(\tilde\theta,\Theta_{k'}') \leq d(\tilde\theta,\Theta_{k'}),
\end{equation}
then $\tilde\theta\in \bigcap_{k'\leq k} \Cons'_{k'}\subseteq \bigcap_{k'\leq k} \Plau'_{k'}$, where $\left\{ \Cons'_{k'}\right\}_{k'\leq k}$ and $\left\{ \Plau'_{k'}\right\}_{k'\leq k}$ are generated by the $\Theta'$.
\end{corollary}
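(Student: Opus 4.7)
The plan is to exhibit the corollary as a short consequence of \cref{well_approx_thm} together with the monotonicity built into the definition of $\Cons_{k'}$. The key observation is that $\Cons_{k'}$ depends on the data $\Theta_{k'}$ only through the two distances $d(\tilde\theta,\Theta_{k'})$ and $d(\tilde\theta,\Theta_{k'-1})$, and these distances enter an inequality in a manifestly monotone way: shrinking either distance preserves membership in $\Cons_{k'}$.

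First I would handle the base case $k'=0$ separately, using $\Cons_0 = \Cons'_0 = [0,2\pi)$, so $\tilde\theta\in\Cons'_0$ holds trivially. Next, for each $k'\geq 1$ with $k'\leq k$, I would substitute the hypothesized bounds $d(\tilde\theta,\Theta'_{k'}) \leq d(\tilde\theta,\Theta_{k'})$ and $d(\tilde\theta,\Theta'_{k'-1}) \leq d(\tilde\theta,\Theta_{k'-1})$ into the defining inequality of $\Cons'_{k'}$, yielding
\begin{equation}
d(\tilde\theta,\Theta'_{k'})+d(\tilde\theta,\Theta'_{k'-1}) \leq d(\tilde\theta,\Theta_{k'})+d(\tilde\theta,\Theta_{k'-1}) < \frac{\pi}{N_{k'}},
\end{equation}
where the strict inequality on the right uses the hypothesis $\tilde\theta\in\Cons_{k'}$. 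This establishes $\tilde\theta\in\Cons'_{k'}$ for every $k'\leq k$, i.e. $\tilde\theta\in\bigcap_{k'\leq k}\Cons'_{k'}$.

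Finally I would invoke \cref{well_approx_thm} applied to the primed data set: since the theorem's statement makes no reference to the specific data that generated the candidate sets, exactly the same argument gives $\bigcap_{k'\leq k}\Cons'_{k'}\subseteq\bigcap_{k'\leq k}\Plau'_{k'}$, and combining with the inclusion established in the previous paragraph completes the proof. The only mildly delicate point—and the closest thing to an ``obstacle''—is making sure that one is free to re-apply \cref{well_approx_thm} to a different data set; but because that theorem's proof only uses properties of $\Theta_{k'}$, $\Cons_{k'}$, and the induced $\hat\theta_{k'}$ (never the specific relationship to a ``true'' angle), its conclusion transfers verbatim to the primed quantities, and no further work is required.
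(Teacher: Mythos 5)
Your proposal is correct and matches the argument the paper intends (the corollary is stated without an explicit proof, but the monotonicity of the defining inequality of $\Cons_{k'}$ in the two distances, followed by an application of \cref{well_approx_thm} to the primed data, is exactly the reasoning being invoked). Your remark that \cref{well_approx_thm} applies verbatim to the primed data set, since its proof uses only the candidate sets and the RPE-induced $\hat\theta'_{k'}$, correctly addresses the only point requiring care.
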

The corollary shows that any improvement of the measurements that reduces the error to any of the consecutively consistent values $\tilde\theta\in \bigcap_{k'\leq \maxk} \Cons_{k'}$ will still cause that $\tilde\theta$ to be identified as a correct value (c.f. \cref{fig:pathology_Xk}, illustrating that this is \emph{not} the case for $\bigcap_{k'\leq \maxk} \Plau_{k'}$).

Notice that the $\Plau_k$ sets are defined in terms of distance from the $\hat\theta_k$, automatically making them intervals.
The $\Cons_k$ are instead defined in terms of distance to the $\Theta_k$.
\cref{well_approx_thm} and \cref{app:plau_def2} together imply that $\bigcap_{k'\leq k}\Cons_{k'}$ is a simple interval.
\begin{lemma}
\label{app:consecutive_consistent_interval}
Let $D=\frac{\pi}{2N_k}-\frac{1}{2}\left|\hat\theta_k-\hat\theta_{k-1}\right|_{2\pi}$.
If $D<0$, then $\Plau_k\cap \Cons_k = \emptyset$.
Otherwise,
\begin{equation}
\label{eq:lemma1_claim}
\Cons_k\cap \Plau_k= \left(\hat\theta_{k-1}-D,\hat\theta_k+D\right)_{2\pi}.
\end{equation}
The $2\pi$ subscript indicates that the interval is circular and should be interpreted as follows: connect $\hat\theta_k$ and $\hat\theta_{k-1}$ along the shortest arc.
Expand that arc by a circular distance $D$ on both sides.
Also, the arc $\Cons_k\cap \Plau_k$ has length $<\pi$.
\end{lemma}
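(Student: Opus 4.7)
The plan is to reduce both set-distances in the inequality defining $\Cons_k$ to point-distances from the RPE-chosen angles $\hat\theta_k$ and $\hat\theta_{k-1}$, after which the result becomes a one-dimensional case analysis on the circle. The first reduction is immediate: by \cref{app:plau_def2}, $\Plau_k$ is the open arc of radius $\pi/N_k$ around $\hat\theta_k$, so every $\tilde\theta \in \Plau_k$ satisfies $d(\tilde\theta, \Theta_k) = |\tilde\theta - \hat\theta_k|_{2\pi}$. Substituting into the definition of $\Cons_k$, the intersection $\Cons_k \cap \Plau_k$ consists of those $\tilde\theta \in \Plau_k$ for which $|\tilde\theta - \hat\theta_k|_{2\pi} + d(\tilde\theta, \Theta_{k-1}) < \pi/N_k$.

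The key step is to show that on this intersection $\hat\theta_{k-1}$ is automatically the nearest element of $\Theta_{k-1}$ to $\tilde\theta$, so that $d(\tilde\theta, \Theta_{k-1}) = |\tilde\theta - \hat\theta_{k-1}|_{2\pi}$ as well. Letting $\theta^{\ast} = \M(\tilde\theta, \Theta_{k-1})$, the Set Triangle Inequality applied to the displayed bound gives $|\theta^{\ast} - \hat\theta_k|_{2\pi} < \pi/N_k$; and since the RPE update rule chooses $\hat\theta_k$ nearest to $\hat\theta_{k-1}$ in $\Theta_k$, I also have $|\hat\theta_{k-1} - \hat\theta_k|_{2\pi} \leq \pi/N_k$. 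Thus $\theta^{\ast}$ and $\hat\theta_{k-1}$ both lie in an arc of length $2\pi/N_k$ centered at $\hat\theta_k$. But the precondition $N_{k-1} < N_k$ of \cref{alg:rpe-basic} makes the spacing $2\pi/N_{k-1}$ of $\Theta_{k-1}$ strictly larger than this arc, so a pigeonhole forces $\theta^{\ast} = \hat\theta_{k-1}$, up to the measure-zero boundary case $|\hat\theta_{k-1} - \hat\theta_k|_{2\pi} = \pi/N_k$ that has already been deferred as a tie-breaking issue.

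With the inequality linearized to $|\tilde\theta - \hat\theta_k|_{2\pi} + |\tilde\theta - \hat\theta_{k-1}|_{2\pi} < \pi/N_k$, I finish by splitting the circle into three pieces with respect to the short arc between $\hat\theta_{k-1}$ and $\hat\theta_k$. Setting $\ell = |\hat\theta_{k-1} - \hat\theta_k|_{2\pi}$, the sum equals $\ell$ on the short arc itself (included precisely when $\ell < \pi/N_k$, i.e., when $D > 0$), equals $\ell + 2|\tilde\theta - \hat\theta_k|_{2\pi}$ on the extension past $\hat\theta_k$ (giving $|\tilde\theta - \hat\theta_k|_{2\pi} < D$), and equals $\ell + 2|\tilde\theta - \hat\theta_{k-1}|_{2\pi}$ on the symmetric extension past $\hat\theta_{k-1}$. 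The union of the three qualifying pieces is exactly the short arc expanded by $D$ on both sides, namely $(\hat\theta_{k-1} - D, \hat\theta_k + D)_{2\pi}$, of total length $\ell + 2D = \pi/N_k$, which is at most $\pi/2 < \pi$ since the preconditions force $N_k \geq 2$; and if $D < 0$ all three pieces are empty, giving the stated dichotomy.

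The hardest part will be the pigeonhole identification of $\theta^{\ast}$ with $\hat\theta_{k-1}$: it is the only place where the RPE update rule genuinely enters (without it, $\hat\theta_{k-1}$ need not lie in the arc about $\hat\theta_k$ and the claimed interval form would fail), and it relies crucially on the strict spacing inequality $N_{k-1} < N_k$. Everything else is a routine triangle-inequality case analysis once both set-distances have been linearized.
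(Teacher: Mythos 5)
Your proof is correct and follows essentially the same route as the paper's: reduce both set distances in the definition of $\Cons_k$ to point distances from $\hat\theta_k$ and $\hat\theta_{k-1}$, then split the circle into the short arc between them and its two extensions. Your pigeonhole argument identifying $\M(\tilde\theta,\Theta_{k-1})$ with $\hat\theta_{k-1}$ makes explicit a step the paper's proof leaves implicit when it invokes \cref{app:plau_def2}, and your exact length computation $\ell+2D=\pi/N_k\le\pi/2$ is a slightly cleaner way to obtain the length bound than the paper's worst-case estimate at $k=1$.
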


\begin{proof}
We first prove that the arc $\Cons_k\cap \Plau_k$ has length $<\pi$.
The worst case is when $k=1$: in this case, assuming that $N_k$ is strictly increasing, $N_1\ge2$, so $D\le\pi/4$.
Since $\Theta_1$ contains $N_k\ge2$ angles, $|\hat\theta_1-\hat\theta_0|_{2\pi}<\pi/2$.
Hence the arc $\left(\hat\theta_{k-1}-D,\hat\theta_k+D\right)_{2\pi}$ has length at most $\pi$.

For the main proof, first note that by applying \cref{app:plau_def2} to the distances in the definition of $\Cons_k$ (\cref{eq:cons_def_app}),
\begin{equation}
\Plau_k\cap \Cons_k = \Plau_k \cap\left\{ \tilde\theta\ \middle|\ \left| \tilde\theta-\hat\theta_k\right|_{2\pi}+\left| \tilde\theta-\hat\theta_{k-1}\right|_{2\pi} < \frac{\pi}{N_k} \right\}.
\end{equation}
Because of the interval representation of $\Plau_k$,
\begin{equation}
\Plau_k\cap \Cons_k = (\hat\theta_k-\pi/N_k,\hat\theta_k+\pi/N_k)_{2\pi}\cap\left\{ \tilde\theta\ \middle|\ \left| \tilde\theta-\hat\theta_k\right|_{2\pi}+\left| \tilde\theta-\hat\theta_{k-1}\right|_{2\pi} < \frac{\pi}{N_k} \right\}.
\end{equation}
The first interval is a superset of the second set therefore
\begin{equation}
\label{eq:lemma1_set_rep}
\Plau_k\cap \Cons_k = \left\{ \tilde\theta\ \middle|\ \left| \tilde\theta-\hat\theta_k\right|_{2\pi}+\left| \tilde\theta-\hat\theta_{k-1}\right|_{2\pi} < \frac{\pi}{N_k} \right\}.
\end{equation}
Because $|\hat\theta_{k-1}-\hat\theta_{k}|_{2\pi}<\pi$, for any $\tilde\theta$
\begin{equation}
\label{eq:consec_cases}
\left| \tilde\theta-\hat\theta_k\right|_{2\pi}+\left| \tilde\theta-\hat\theta_{k-1}\right|_{2\pi} = |\hat\theta_{k-1}-\hat\theta_{k}|_{2\pi} + \begin{cases}
0,& \tilde\theta\in(\hat\theta_{k-1},\hat\theta_k)_{2\pi}~; \\2\min\{|\hat\theta_{k-1}-\tilde\theta|_{2\pi}, |\hat\theta_{k}-\tilde\theta|_{2\pi}\},& \tilde\theta\notin(\hat\theta_{k-1},\hat\theta_k)_{2\pi}~.
\end{cases}
\end{equation}
Consider the first case in \cref{eq:consec_cases}, where $\tilde\theta\in(\hat\theta_{k-1},\hat\theta_k)_{2\pi}$: from \cref{eq:lemma1_set_rep} we obtain
\begin{equation}
\tilde\theta\in\Plau_k\cap\Cons_k\quad\Leftrightarrow\quad\left| \hat\theta_k-\hat\theta_{k-1}\right|_{2\pi} < \frac{\pi}{N_k}.
\end{equation}
The right-hand inequality may be rewritten in terms of $D$ as
\begin{equation}
0< \frac{\pi}{2N_k} - \frac{1}{2}\left| \hat\theta_k-\hat\theta_{k-1}\right|_{2\pi} = D,
\end{equation}
and thus $\tilde\theta$ is in the interval in \cref{eq:lemma1_claim}.

Alternatively, in the second case of \cref{eq:consec_cases}, where $\tilde\theta\notin(\hat\theta_{k-1},\hat\theta_k)_{2\pi}$, $\tilde\theta\in\Plau_k\cap\Cons_k$ is equivalent to
\begin{equation}
\left| \hat\theta_k-\hat\theta_{k-1}\right|_{2\pi} +2\min\{|\hat\theta_{k-1}-\tilde\theta|_{2\pi}, |\hat\theta_{k}-\tilde\theta|_{2\pi}\} < \frac{\pi}{N_k},
\end{equation}
or
\begin{equation}
\min\{|\hat\theta_{k-1}-\tilde\theta|_{2\pi}, |\hat\theta_{k}-\tilde\theta|_{2\pi}\}< \frac{\pi}{2N_k} - \frac{1}{2}\left| \hat\theta_k-\hat\theta_{k-1}\right|_{2\pi} = D,
\end{equation}
which is precisely checking if $\tilde\theta$ is within $D$ of the endpoints of the angular interval, also precisely as desired.
\end{proof}

\begin{theorem}
Testing for membership in $\bigcap_{k'<k} \Cons_{k'}$ is equivalent to testing for membership in the intersection of the intervals in \cref{app:consecutive_consistent_interval}.
\end{theorem}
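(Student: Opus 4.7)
The plan is to show that the intersection $\bigcap_{k'<k}\Cons_{k'}$ coincides exactly with the intersection of the intervals identified in \cref{app:consecutive_consistent_interval}, which are the sets $\Cons_{k'}\cap \Plau_{k'}$. Once that set equality is established, testing membership in one is manifestly equivalent to testing membership in the other, and the latter reduces to an intersection of simple circular intervals (easy to evaluate in constant time per generation).

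The argument I would lay out in two short steps. First, one inclusion is immediate: for every $k'$, $\Cons_{k'}\cap \Plau_{k'}\subseteq \Cons_{k'}$, so
\begin{equation}
\bigcap_{k'<k}\bigl(\Cons_{k'}\cap \Plau_{k'}\bigr)\subseteq \bigcap_{k'<k}\Cons_{k'}.
\end{equation}
Second, for the reverse inclusion, suppose $\tilde\theta\in\bigcap_{k'<k}\Cons_{k'}$. By \cref{well_approx_thm}, $\bigcap_{k'<k}\Cons_{k'}\subseteq \bigcap_{k'<k}\Plau_{k'}$, so $\tilde\theta\in\Plau_{k'}$ for every $k'<k$ as well. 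Therefore $\tilde\theta\in \Cons_{k'}\cap \Plau_{k'}$ for each $k'<k$, giving
\begin{equation}
\bigcap_{k'<k}\Cons_{k'}\subseteq \bigcap_{k'<k}\bigl(\Cons_{k'}\cap \Plau_{k'}\bigr).
\end{equation}
Combining the two inclusions yields set equality, so the membership tests are identical.

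To close, I would remark that the right-hand side is an intersection of the explicit circular intervals $(\hat\theta_{k'-1}-D_{k'},\hat\theta_{k'}+D_{k'})_{2\pi}$ from \cref{app:consecutive_consistent_interval} (together with $\Cons_0\cap \Plau_0=\Plau_0$ at $k'=0$, which is itself an interval), so membership can be checked classically in time linear in $k$ without ever materializing the exponentially many branches that make up $\Cons_{k'}$ on its own. The only real subtlety in the whole argument is making sure \cref{well_approx_thm} is invoked correctly; the rest is a one-line set-theoretic manipulation, and I do not anticipate any genuine obstacles beyond being explicit that ``equivalent'' here means set-theoretic equality of the two test regions.
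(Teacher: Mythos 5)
Your proposal is correct and follows essentially the same route as the paper: the paper's own proof is a one-liner that combines \cref{well_approx_thm} (giving $\bigcap_{k'}\Cons_{k'}\subseteq\bigcap_{k'}\Plau_{k'}$, hence $\bigcap_{k'}\Cons_{k'}=\bigcap_{k'}(\Cons_{k'}\cap\Plau_{k'})$) with \cref{app:consecutive_consistent_interval} (which exhibits each $\Cons_{k'}\cap\Plau_{k'}$ as an explicit circular interval). You have simply written out the two inclusions explicitly, which is a fine elaboration but not a different argument.
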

\begin{proof}
This follows from \cref{app:consecutive_consistent_interval}, noticing that, by \cref{well_approx_thm}, $\bigcap_{k'\leq k} \Cons_{k'} \subseteq \bigcap_{k'\leq k}\Plau_{k'}$.
\end{proof}

\subsection{Historical Consistency}
Recall the definition of $\Delta_k[\delta\theta_{k}]$ (\cref{eq:delta_theta_set} in the main text):
\begin{equation}
\label{eq:delta_theta_set_app}
\Delta_k[\delta\theta_{k}] = \left\{\tilde \theta~\middle|~d(\tilde \theta,\Theta_k) < \frac{\delta\theta_k}{N_k}\right\},
\end{equation}
where $\{\delta\theta_k\}_{k\leq\maxk}$ is a sequence of positive real numbers.
In the following, $|A|$ denotes the length of the interval, $A$.

\begin{lemma}\label{app:overlap_interval_lemma}
Suppose $a\in\mathbb{R}$, and $0\leq L'<L$.
Then,
\begin{equation}
b\in(a,a+L) \iff |(b-L',b+L')\cap(a,a+L)| > L'.
\end{equation}
This generalizes to angular intervals as long as $L+2L'< 2\pi$.
\end{lemma}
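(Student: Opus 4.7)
The plan is a direct computation of the intersection length as a piecewise function of $b$, followed by a case analysis. Since both intervals $(b-L',b+L')$ and $(a,a+L)$ are open and nonempty, their intersection is the open interval
\begin{equation}
\bigl(\max\{a,b-L'\},\min\{a+L,b+L'\}\bigr)
\end{equation}
(interpreted as empty if the left endpoint is at least the right endpoint). Its length is therefore
\begin{equation}
\ell(b)=\max\bigl\{0,\min\{a+L,b+L'\}-\max\{a,b-L'\}\bigr\}.
\end{equation}
The whole lemma reduces to showing $\ell(b)>L'$ iff $b\in(a,a+L)$.

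For the forward direction, I would assume $a<b<a+L$ and split into the four sub-cases determined by whether $b-L'\ge a$ and whether $b+L'\le a+L$. In each sub-case the intersection simplifies explicitly: one gets lengths $2L'$, $b+L'-a$, $a+L-b+L'$, and $L$ respectively. Each is strictly greater than $L'$ because of the strict inequalities $b>a$, $b<a+L$, and $L>L'$. So $\ell(b)>L'$.

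For the converse I would prove the contrapositive: if $b\notin(a,a+L)$, then $\ell(b)\le L'$. Suppose $b\le a$ (the case $b\ge a+L$ is symmetric). Then $b-L'\le a$, so the left endpoint of the intersection is $a$, and the right endpoint is at most $b+L'\le a+L'$. Thus $\ell(b)\le(a+L')-a=L'$, as required. Combining both directions yields the equivalence.

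For the angular extension, the hypothesis $L+2L'<2\pi$ guarantees that the two intervals, viewed as arcs, together occupy strictly less than the full circle, so their intersection is the union of at most one arc that does not wrap around. This lets us lift the situation to $\mathbb{R}$ by choosing a branch cut outside $(a,a+L+2L')$, reducing the angular case to the real case already proved. The only point requiring care, and the main obstacle in writing this cleanly, is keeping track of the strict-vs-weak inequalities at the four breakpoints so that the ``iff'' is sharp; beyond that, the argument is purely elementary geometry of intervals.
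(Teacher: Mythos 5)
The paper states this lemma without proof, so there is no argument of record to compare against; your proof supplies the missing elementary verification. The direct computation $\ell(b)=\max\bigl\{0,\min\{a+L,b+L'\}-\max\{a,b-L'\}\bigr\}$, the four-way case split for the forward direction, the contrapositive for the converse, and the reduction of the angular case to the real line via a cut in the complement of the union of the two arcs (which is nonempty precisely because $L+2L'<2\pi$, and which also guarantees the intersection is a single arc) are all correct. One small point deserves attention: in your first sub-case the intersection length is $2L'$, and the strict inequality $2L'>L'$ is equivalent to $L'>0$, which is not among the inequalities you cite ($b>a$, $b<a+L$, $L>L'$) and is not guaranteed by the stated hypothesis $0\leq L'<L$. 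Indeed, when $L'=0$ the interval $(b-L',b+L')$ is empty, the right-hand side $|\emptyset|>0$ is always false, and the biconditional fails for any $b\in(a,a+L)$ --- so the lemma as written implicitly requires $L'>0$. This is harmless for the paper's application in \cref{thm:ang_hist_equiv}, where $L'=\delta\theta_k/N_k$ with $\delta\theta_k>0$, but you should either assume $L'>0$ explicitly or note that the degenerate case is excluded; with that repair your argument is complete.
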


\begin{theorem}
\label{thm:ang_hist_equiv}
Assume $\frac{\delta\theta_{k-1}}{N_{k-1}}> \frac{\delta\theta_{k}}{N_{k}}$ for all $0<k\leq \maxk$.
Then the two following statements are equivalent:
\begin{equation}
\label{thm:ang_hist:eq_hyp_contain}
\hat\theta_k \in \bigcap_{k'\leq k} \Delta_{k'}\quad \forall k\leq \maxk,
\end{equation}
and
\begin{equation}
\label{thm:ang_hist:eq_hyp_length}
\left|\bigcap_{k'\leq k} \Delta_{k'}\right|>\frac{\delta\theta_k}{N_k} \quad \forall k\leq \maxk.
\end{equation}
\end{theorem}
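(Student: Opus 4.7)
The plan is to exploit the interval formulation \cref{criterion:uniform_consistency-interval}, writing $J_k := \bigcap_{k'\leq k}\Delta_{k'}[\delta\theta_{k'}] = \bigcap_{k'\leq k} I_{k'}$, where $I_{k'} := \{\tilde\theta : |\tilde\theta-\hat\theta_{k'}|_{2\pi}<\delta\theta_{k'}/N_{k'}\}$ is an open interval of length $2\delta\theta_{k'}/N_{k'}$ centered at $\hat\theta_{k'}$. Under the monotonicity hypothesis $\delta\theta_{k-1}/N_{k-1}>\delta\theta_k/N_k$, these intersections shrink in a controlled way, and the proof reduces to a short geometric argument built on the interval overlap lemma, \cref{app:overlap_interval_lemma}. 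The key observation is that because $\hat\theta_k$ is the center of $I_k$, its presence in the intersection $J_k$ is equivalent to $J_k$ occupying strictly more than half of $I_k$.

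For the forward direction, \cref{thm:ang_hist:eq_hyp_contain} $\Rightarrow$ \cref{thm:ang_hist:eq_hyp_length}, I would induct on $k$. The base case $k=0$ is immediate: $J_0 = I_0$ has length $2\delta\theta_0/N_0 > \delta\theta_0/N_0$. For the inductive step, assume $|J_{k-1}| > \delta\theta_{k-1}/N_{k-1}$; by monotonicity $|J_{k-1}| > \delta\theta_k/N_k$, so \cref{app:overlap_interval_lemma} applies with $L = |J_{k-1}|$ and $L' = \delta\theta_k/N_k$. Taking $b = \hat\theta_k$, which lies in $J_{k-1}$ since $\hat\theta_k \in J_k \subseteq J_{k-1}$ by hypothesis, the lemma yields $|I_k \cap J_{k-1}| > \delta\theta_k/N_k$, i.e., $|J_k| > \delta\theta_k/N_k$.

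For the backward direction, \cref{thm:ang_hist:eq_hyp_length} $\Rightarrow$ \cref{thm:ang_hist:eq_hyp_contain}, I would give a direct geometric argument that needs no induction. Since $J_k \subseteq I_k$ and $I_k$ is an open interval of length $2\delta\theta_k/N_k$ centered at $\hat\theta_k$, any subinterval $(a,b) = J_k$ with $b - a > \delta\theta_k/N_k = |I_k|/2$ must straddle the midpoint $\hat\theta_k$: indeed, $a \geq \hat\theta_k - \delta\theta_k/N_k$ combined with $b > a + \delta\theta_k/N_k$ forces $b > \hat\theta_k$, and symmetrically $b \leq \hat\theta_k + \delta\theta_k/N_k$ combined with the length bound forces $a < \hat\theta_k$. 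Hence $\hat\theta_k \in J_k$.

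The main obstacle I anticipate is justifying that each $J_k$ can be treated as a single open interval rather than a union of intervals emanating from different elements of the $\Theta_{k'}$. This is not automatic from the monotonicity assumption alone; it relies on the ambient local-consistency constraint \cref{crit:delta_theta-approx}, which makes the set-to-interval identification \cref{criterion:uniform_consistency-interval} available. Once that identification is in hand, both the application of \cref{app:overlap_interval_lemma} in the forward direction and the "sub-interval longer than half forces the midpoint to be contained" geometry in the backward direction are essentially bookkeeping.
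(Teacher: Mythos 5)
Your proposal is correct and follows essentially the same route as the paper's proof: both rest on identifying $\bigcap_{k'\leq k}\Delta_{k'}[\delta\theta_{k'}]$ with an intersection of intervals centered at the $\hat\theta_{k'}$ and then invoking \cref{app:overlap_interval_lemma}, with the monotonicity hypothesis used exactly where you use it, namely to guarantee $L'<L$ when the lemma is applied. The only differences are organizational---the paper runs a single induction establishing the biconditional at each generation via the two-sided form of the lemma, whereas you split off the backward direction as a direct ``a subinterval of $I_k$ longer than $|I_k|/2$ must contain the center $\hat\theta_k$'' argument---and the reliance on \cref{crit:delta_theta-approx} that you flag for the set-to-interval identification is an assumption the paper also makes, implicitly, through the context of \cref{crit:delta_theta-consistency}.
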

\begin{proof}
We proceed by induction on $\maxk$.
For the base case, note that $\hat\theta_0\in\Delta_0[\delta\theta_0]$ (i.e., \cref{thm:ang_hist:eq_hyp_contain} holds), and $N_0=1$, so $|\Delta_0[\delta\theta_0]| = 2 \delta\theta_0> \delta\theta_0$ (i.e., \cref{thm:ang_hist:eq_hyp_length} holds).

For the induction step, let $\maxk$ be a positive integer.
It suffices to assume (as an induction hypothesis) that both \cref{thm:ang_hist:eq_hyp_contain} and \cref{thm:ang_hist:eq_hyp_length} hold for $\maxk$, and prove that, under this assumption,
\begin{equation}\label{thm:ang_hist_equiv_mainpoint}
\hat\theta_{\maxk+1} \in \bigcap_{k'\leq {\maxk+1}} \Delta_{k'} \quad\Leftrightarrow\quad \left|\bigcap_{k'\leq {\maxk+1}} \Delta_{k'}\right|>\frac{\delta\theta_{\maxk+1}}{N_{\maxk+1}}.
\end{equation}
First, notice that by the definitions \cref{eq:plau_def_app} and \cref{eq:delta_theta_set_app},
\begin{equation}
\bigcap_{k'\leq {\maxk+1}} \Delta_{k'} = \left( \Plau_k\cap\Delta_{\maxk}\right) \cap \bigcap_{k'\leq {\maxk}} \Delta_{k'},
\end{equation}
and
\begin{equation}
\label{eq:b_interval}
\Plau_k\cap\Delta_{\maxk}=\left(\hat\theta_{\maxk+1}-\frac{\delta\theta_{\maxk+1}}{N_{\maxk+1}}, \hat\theta_{\maxk+1}+\frac{\delta\theta_{\maxk+1}}{N_{\maxk+1}}\right).
\end{equation}
\cref{thm:ang_hist_equiv_mainpoint} is then equivalent to
\begin{equation}
\hat\theta_{\maxk+1} \in \bigcap_{k'\leq {\maxk}} \Delta_{k'} \iff \left|\left( \Plau_k\cap\Delta_{\maxk}\right) \cap \bigcap_{k'\leq {\maxk}} \Delta_{k'}\right|>\frac{\delta\theta_{\maxk+1}}{N_{\maxk+1}}.
\end{equation}
Because $\bigcap_{k'\leq {\maxk}} \Delta_{k'}$ is an interval of length $L>\frac{\delta\theta_{\maxk+1}}{N_{\maxk+1}}$, inserting \cref{eq:b_interval} and applying \cref{app:overlap_interval_lemma} gives the claim.
\end{proof}

\section{Robust Resource Scaling}
\label[appendix]{app:heisenberg}

In Ref~\cite{Kimmel2015Dec}, RPE was shown to be robust to additive errors, which in turn makes the protocol robust to a range of physical errors including SPAM errors.
While there was found to be an mistake in the details of that analysis \cite{belliardo20a}, here we show that the big picture result still holds, and any protocol that has RPE-like characteristics can be made robust.

Let $X=\{X_1,\dots,X_\maxk\}$ be a set of binomial random variables, where it requires cost $c_k$ to obtain a sample of $X_k$.
Suppose there is a protocol that, for each $i\in[K]$, takes $m_k$ samples of $X_k$ to create an estimate $\hat{x}_k$ of $\mathbb{E}[X_k]$ (the average value of $X_k$), where $|\mathbb{E}[X_k]-\hat{x}_k|< \delta$ with probability at least $1-2\exp{[-2m_k\delta]}$.
The cost of this protocol is $\sum_{i=1}^Kc_km_k$.
(Here $\delta$ is the same constant for all $k$.)
Note that $1-2\exp{(-2m_k\delta)}$ probability of success is natural because many results that involve bounding the success probability of binomial random variables rely on Hoeffding's inequality, which produces this term.

Given such a protocol, we can simulate it using binomial random variables $X'=\{X_1',\dots,X_\maxk'\}$ that approximate $X$, if we are promised that for all $k$, $\mathbb{E}[X_k]-\mathbb{E}[X_k']<\epsilon<\delta$ for some constant $\epsilon$.
If the cost of sampling $X_k'$ is $c_k$, then the cost of the new protocol will be only a constant factor more than the original protocol.
Consider taking $m_k'=m_k\delta/(\delta-\epsilon)$ samples of $X_k'$.
Then using the Hoeffding inequality for the binomial distribution, we can obtain an estimate $\hat{x}_k'$ of $\mathbb{E}[X_k']$ to within additive error $\delta-\epsilon$ with probability of error at most
\begin{equation}
2\exp{[-2m_k'(\delta-\epsilon)]}=2\exp{[-2m_k\delta]}.
\end{equation}
Because $\mathbb{E}[X_k]-\mathbb{E}[X_k']<\epsilon<\delta$, this estimate $\hat{x}_k'$ is actually within $\delta$ of $\mathbb{E}[X_k]$
with probability of error $2\exp{(-2m_k\delta)}$.
Thus we can use our estimates $\hat{x}_k'$ in place of $\hat{x}_k$ in the original protocol and achieve the same result.
The cost is $\sum_{i=1}^Kc_km_k\delta/(\delta-\epsilon)$, as claimed.

The consequence of this analysis is that \textit{any} experiment dealing with binomial random variables that does not require precise estimates of any single variable will still be successful even if those variables become biased, at the cost of a multiplicative, constant overhead.
In particular, this means that it is possible to still achieve Heisenberg scaling using the phase estimation protocol outlined here, even in the presence of noise, as long as the noise does not shift the probabilities of the measurement outcomes by more than a constant.

However, this statement is difficult to take advantage of in practice, since knowing the how much to increase the sample number requires knowing the size of $\epsilon$.
This brings us back to the main purpose of the present work, which is to detect when the noise does not satisfy this property.

\section{Sample Complexity\label{app:sample_complexity}}

In this appendix we demonstrate the scaling of sample complexity, as a function of noise in the quantum channel, to achieve a particular target error bound.

\subsection{Preliminaries}
For sufficiently large number of samples $M$, the binomial distribution is approximately the normal distribution:
\begin{equation}
\label{binom_approx_norm}
    \Binom_{p, M}(k) \approx \Norm_{Mp,Mp(1-p)}(k)=\frac{1}{\sqrt{2\pi}\sqrt{Mp(1-p)}}e^{-\frac{1}{2}\frac{(k-Mp)^2}{Mp(1-p)}}.
\end{equation}

Also, the product of two normal distributions is a rescaled normal distribution.
We prove this here, and obtain the rescaling factor.
First, observe that
\begin{align}
    \left(2\pi\sigma_a\sigma_b\Norm_{\mu_a,\sigma_a}(x)\Norm_{\mu_b,\sigma_b}(x)\right)^{-2}
    \hspace*{-14em}&\hspace*{14em}= \exp\left[ \left(\frac{x-\mu_a}{\sigma_a}\right)^2 + \left(\frac{x-\mu_b}{\sigma_b}\right)^2\right]\\
    &=\exp\left[ \frac{x^2-2\mu_a x + \mu_a^2}{\sigma_a^2}
    + \frac{x^2-2\mu_bx + \mu_b^2}{\sigma_b^2}\right]\\
    &=\exp\left[ \left(\sqrt{\sigma_a^{-2}+\sigma_b^{-2}}x - \frac{\mu_a\sigma_a^{-2}+\mu_b\sigma_b^{-2}}{\sqrt{\sigma_a^{-2}+\sigma_b^{-2}}}\right)^2
 + \left(\mu_a^2\sigma_a^{-2} +\mu_b^2\sigma_b^{-2}-\frac{(\mu_a\sigma_a^{-2}+\mu_b\sigma_b^{-2})^2}{\sigma_a^{-2}+\sigma_b^{-2}}\right)\right]\\
    &=\left(\sqrt{2\pi}\sigma\Norm_{\mu,\sigma}(x)\right)^{-2}\exp\left[
    \mu_a^2\sigma_a^{-2} +\mu_b^2\sigma_b^{-2}-\frac{(\mu_a\sigma_a^{-2}+\mu_b\sigma_b^{-2})^2}{\sigma_a^{-2}+\sigma_b^{-2}}\right]
\end{align}
where $\sigma^{-2}=\sigma_a^{-2}+\sigma_b^{-2}$ and $\mu=\sigma^2(\mu_a\sigma_a^{-2}+\mu_b\sigma_b^{-2})$.
It follows that
\begin{equation}
\Norm_{\mu_a,\sigma_a}(x)\Norm_{\mu_b,\sigma_b}(x)=\Norm_{\mu,\sigma}(x) \frac{\sigma}{\sqrt{2\pi}\sigma_a\sigma_b}\exp\left[-\frac{1}{2}\left(\mu_a^2\sigma_a^{-2} +\mu_b^2\sigma_b^{-2}-\frac{(\mu_a\sigma_a^{-2}+\mu_b\sigma_b^{-2})^2}{\sigma_a^{-2}+\sigma_b^{-2}}\right)\right].
\end{equation}
Better yet, the scale factor can be expressed in terms of another normal
distribution:
\begin{equation}
    \Norm_{\mu_a,\sigma_a}(x)\Norm_{\mu_b,\sigma_b}(x)=\Norm_{\mu,\sigma}(x)\Norm_{0,\sigma'}(x'),\label{app:norm_pdf_prod}
\end{equation}
where $x'=\mu_a-\mu_b$ and $\sigma'^2=\sigma_a^2+\sigma_b^2$.

\subsection{Complexity}

The experimental data is used by the RPE algorithm to generate an estimate of the angle of rotation of $\estunitary$.
As described in \cref{eq:candidate_estimates}, this is the angle $\hat\phi$ satisfying
\begin{equation}
    \frac{2\frac{\hat s}{M}-1}{2\frac{\hat c}{M} -1} = \tan\hat\phi.\label{app:eqn_phi_constraint}
\end{equation}
where $\hat c$ and $\hat s$ are the sample counts from \cref{eq:cosine_distribution} and \cref{eq:sine_distribution}, respectively, and $M$ is the total number of measurements.
It follows that the probability of measuring an angle $\hat \phi$ is
\begin{equation}
\sum_{t,t'=0}^M \Binom_{{P}_{\mathrm{c},k}, M}\left(t \right)
    \Binom_{{P}_{\mathrm{s},k}, M}\left(t'\right) \delta_{F(t,t')},
\end{equation}
where $\delta$ is the Kronecker delta function and
\begin{equation}
F(t,t') = \cot\hat\phi\left( \frac{t}{M}-\frac{1}{2}\right) -\left(\frac{t'}{M} -\frac{1}{2}\right).
\end{equation}
For sufficiently large $M$, using \eqref{binom_approx_norm} we can approximate the probability density for calculating $\hat\phi$ from the count data as:
\begin{align}
    &\sum_{t,t'=0}^M \Norm_{M{P}_{\mathrm{c},k},M{P}_{\mathrm{c},k}(1-{P}_{\mathrm{c},k})}(t)
    \Norm_{M{P}_{\mathrm{s},k},M{P}_{\mathrm{s},k}(1-{P}_{\mathrm{s},k})}(t') \delta_{F(t,t')},\\
    \label{line2c12}
    \approx&\ \iint\,dt\,dt'\,
    \Norm_{M{P}_{\mathrm{c},k},M{P}_{\mathrm{c},k}(1-{P}_{\mathrm{c},k})}(t)
    \Norm_{M{P}_{\mathrm{s},k},M{P}_{\mathrm{s},k}(1-{P}_{\mathrm{s},k})}(t')\delta(AF(t,t'))\\
    =\ &\frac{1}{M^2}\iint\,dt\,dt'\,
    \Norm_{{P}_{\mathrm{c},k},\sigma_{\mathrm{c}}^2/M}(t/M)
    \Norm_{{P}_{\mathrm{s},k},\sigma_s^2/M}(t'/M)\delta(AF(t,t')),\label{c13}
\end{align}
where
\begin{equation}
    \sigma_s^2 = {P}_{\mathrm{s},k}(1-{P}_{\mathrm{s},k}),\quad\text{ and }\quad
    \sigma_{\mathrm{c}}^2 = {P}_{\mathrm{c},k}(1-{P}_{\mathrm{c},k}),
\end{equation}
and in \eqref{line2c12} we have changed the Kronecker delta to a Dirac delta and introduced a free parameter $A$ to ensure proper normalization.
We further simplify this expression by parameterizing the probability using \cref{eqn:noisy_probability_form},
\begin{equation}
\label{prob_params}
    \left(\begin{array}{c}2P_{\mathrm{c},k}-1\\2P_{\mathrm{s},k}-1\end{array}\right)
    =\left(\begin{array}{c}\lambda\cos\phi\\\lambda\sin\phi\end{array}\right),
\end{equation}
so that \eqref{c13} becomes
\begin{equation}
\label{c16}
    \frac{1}{M^2}\iint\,dt\,dt'\,
    \Norm_{{P}_{\mathrm{c},k}-1/2,\sigma_{\mathrm{c}}^2/M}(t/M-1/2)
    \Norm_{{P}_{\mathrm{s},k}-1/2,\sigma_s^2/M}(t'/M-1/2)\delta(AF(t,t')).
\end{equation}
Using the fact that
\begin{equation}
    \int f(x)\delta(g(x))\,dx = \frac{f(0)}{|g'(0)|},
\end{equation}
\eqref{c16} becomes
\begin{equation}
    \frac{1}{AM}\int\,dt\,
    \Norm_{{P}_{\mathrm{c},k}-1/2,\sigma_{\mathrm{c}}^2/M}(t/M-1/2)
    \Norm_{{P}_{\mathrm{s},k}-1/2,\sigma_s^2/M}((t/M-1/2)\cot\hat\phi).
\end{equation}
Inserting the right-hand side of \eqref{prob_params} and pulling the $\cot\hat\phi$ out of the argument of the second normal distribution gives
\begin{equation}
    \frac{1}{AM\tan\hat\phi}\int\,dt\,
    \Norm_{(\lambda\cos\phi)/2,\sigma_{\mathrm{c}}^2/M}(t/M-1/2)
    \Norm_{(\lambda\sin\phi\cot\hat\phi)/2,\sigma_s^2\cot^2\hat\phi/M}(t/M-1/2).
\end{equation}
We then change variables to $\tau\equiv\frac{t}{M}-\frac{1}{2}$, resulting in
\begin{equation}
    \frac{1}{A\tan\hat\phi}\int\,d\tau\,
    \Norm_{(\lambda\cos\phi)/2,\sigma_{\mathrm{c}}^2/M}(\tau)
    \Norm_{(\lambda\sin\phi\cot\hat\phi)/2,\sigma_s^2\cot^2\hat\phi/M}(\tau).
\end{equation}
Using \cref{app:norm_pdf_prod}, we get
\begin{align}
    &\frac{1}{A\tan\hat\phi}\int\,d\tau\,
    \Norm_{\mu,\sigma}(\tau)
    \Norm_{0,(\sigma_{\mathrm{c}}^2+\sigma_s^2\cot^2\hat\phi)/M}((\lambda\cos\phi)/2-(\lambda\sin\phi\cot\hat\phi)/2)\\
    =\ & \frac{1}{A\tan\hat\phi}
    \Norm_{0,(\sigma_{\mathrm{c}}^2+\sigma_s^2\cot^2\hat\phi)/M}(\lambda(\cos\phi-\sin\phi\cot\hat\phi)/2)\\
    =\ & \frac{\cos\hat\phi}{A\lambda}
    \Norm_{0,(\sigma_{\mathrm{c}}^2\sin^2\hat\phi+\sigma_s^2\cos^2\hat\phi)/(M\lambda^2)}((\cos\phi\sin\hat\phi-\sin\phi\cos\hat\phi)/2)\\
    =\ & \frac{\cos\hat\phi}{A\lambda}
    \Norm_{0,(\sigma_{\mathrm{c}}^2\sin^2\hat\phi+\sigma_s^2\cos^2\hat\phi)/(M\lambda^2)}((\sin(\phi-\hat\phi)/2))\\
    =\ &\Norm_{0,(\sigma_{\mathrm{c}}^2\sin^2\hat\phi+\sigma_s^2\cos^2\hat\phi)/(M\lambda^2)}((\sin(\phi-\hat\phi)/2)),
\end{align}
where we have put $A^{-1}=\lambda/\cos\hat\phi$ to ensure the proper normalization of the probability distribution.
Observe that since $\sigma_{\mathrm{c}}^2,\sigma_s^2\in[0,0.25]$, in the large $M$ and small $\lambda$ limit, the quantity
\begin{equation}
    \frac{\sigma_{\mathrm{c}}^2\sin^2\hat\phi+\sigma_s^2\cos^2\hat\phi}{M\lambda^2}=\frac{O(1)}{M\lambda^2}
\end{equation}
controls the variance of the normal distribution of measured angles, i.e., the number of samples is scaled by a factor of $\lambda^2$.

\section{Limitations of an Alternative, Set Formulation of RPE}
\label[appendix]{app:interval-form}

In this appendix, we consider only the case that $N_k=2^{k}$, and discuss an alternative formulation of RPE.
In standard RPE, a single angle $\hat\theta_k$ is selected at each generation.
Instead, one might imagine identifying sets $\Omega_k$ of permitted angles, selected on the basis of their proximity to $\Theta_k$.
Contrast this with the sets of angles used in criteria 1, 2, and 3 developed above, which are defined by proximity to a single selected angle $\hat\theta_k$.
In particular, these $\Omega_k$ are not necessarily a single interval, and may be computationally intensive to track.
One might therefore expect to obtain additional deductive strength by using these sets, since they require significantly more classical power to manage.
However, we show that this alternative formulation cannot tolerate errors greater than $\pi/3$ without failing to exclude infinitely many false candidate values for the angle, and therefore it provides no advantage over standard RPE.

To make this protocol precise, assume that at every generation the measurements suffer an error no greater than some fixed angle $2\pi\alpha$,
\begin{equation}
d(2\pi \tau,\Theta_k)<(2\pi\alpha) 2^{-k},\label{eqn:putative_limit}
\end{equation}
where $2\pi\tau$ is the ``true'' angle we are attempting to measure, and we are guaranteeing that the measurements are sufficiently accurate.
One may ask the question, ``Can we relax $2\pi\alpha$ above the $\pi/3$ uniform-approximation limit (the limit for standard RPE) and still obtain a valid estimate of the true angle $2\pi\tau$?''
Here, we provide a counterexample by showing that, for any $2\pi\alpha>\pi/3$ and integer $j\geq 0$, there is a sequence of measurements satisfying \cref{eqn:putative_limit} that always includes a false angle $2\pi\phi$, defined by
\begin{equation}\label{eq:phi_def}
2\pi\phi=2\pi\frac{2^{-j}}{3}.
\end{equation}
In other words, in this relaxed case we can find measurements satisfying the error bound that converge to any one of infinitely many incorrect $2\pi\phi$.

To construct the counterexample, we first clarify the exact process of this generalized set formulation of RPE, which we parameterize by some $\{\beta_k\}_{k\geq 0}$, with $\beta_k>0$.
As with standard RPE, at every generation, $N_k$ candidate values for $\theta$ are provided in $\Theta_k$.
We define the permitted subset $\Omega_k$ of the angular space as
\begin{equation}
\label{app_eq:Omega_k_def}
\Omega_k=\Omega_{k-1}\cap \left\{ \theta\ \middle|\ \exists \tilde\theta_k\in\Theta_k \ :\ \left| \tilde\theta_k-\theta \right|_{2\pi}<2\pi\beta_k \right\}
= \bigcap_{k'\leq k} \left\{ \theta\ \middle|\ \exists \tilde\theta_{k'}\in\Theta_{k'} \ :\ \left| \tilde\theta_{k'}-\theta \right|_{2\pi}<2\pi\beta_{k'} \right\},
\end{equation}
for $k\geq 0$, and letting $\Omega_{-1}=[-\pi,\pi)$.
$\beta_k$ must be at least large enough that $\Omega_k$ contains the true angle $2\pi\tau$, which can be as far as $2\pi\alpha$ from $\M(2\pi\tau,\Theta_k)$.
Therefore we also require that this generalized set formulation use
\begin{equation}
2\pi\beta_k \geq 2\pi\alpha 2^{-k}.
\end{equation}
Without loss of generality, we choose $\beta_k=\alpha 2^{-k}$ to saturate this inequality, because any larger choice of $\beta_k$ will necessarily create a superset of $\Omega_k$, and therefore also fail to exclude our pathological false angle $\phi$.

If $2\pi\phi$ is to serve as the angle in the counterexample, we need to show that $2\pi\phi$ is a member of $\Omega_k$ for every $k\geq0$.
This will follow immediately from \cref{app_eq:Omega_k_def} if we show that $2\pi\phi$ is within $2\pi\beta_k$ of $\Theta_k$ for each $k$, i.e.,
\begin{equation}\label{app_genform:counteexample_condition}
2\pi\phi \in \left\{ \theta\ \middle|\ \exists \tilde\theta_k\in\Theta_k \ :\ \left| \tilde\theta_k-\theta\right|_{2\pi}<2\pi\beta_k \right\}.
\end{equation}
Assume without loss of generality that $\tau=0$.
Let the error in the measured angle at generation $k$ be $2\pi\epsilon_k$.
Then all elements of $\Theta_k$ incur an error of $2\pi\frac{\epsilon_k}{N_k}$, so
\begin{equation}
 \Theta_k = \left\{ \frac{2\pi}{N_k}(q+\epsilon_k) ~\middle|~ q\in \mathbb{Z}_{N_k}\right\}.
\end{equation}
If we insert this into \cref{app_genform:counteexample_condition}, i.e., replace $\tilde\theta_k$ with $\frac{2\pi}{N_k}(q+\epsilon_k)$, then \cref{app_genform:counteexample_condition} is equivalent to
\begin{equation}
\exists q\in \mathbb{Z}_{N_k} \ :\ 2\pi\left| \frac{1}{N_k}\left(q+\epsilon_k\right)-\phi\right|_{2\pi} < 2\pi\alpha2^{-k},
\end{equation}
i.e.,
\begin{equation}\label{app_genform:counteexample_condition2}
\exists q\in \mathbb{Z}_{N_k} \ :\ \left| q+\epsilon_k-\phi N_k \right|_{2\pi} < \alpha.
\end{equation}
If we define $F(x)=x- \lfloor x\rfloor$ to be the fractional part of the real number $x$, then \cref{app_genform:counteexample_condition2} becomes
\begin{equation}
\label{app_genform:counteexample_condition3}
F(\epsilon_k-\phi 2^k) < \alpha\quad\text{or}\quad F(\epsilon_k-\phi 2^k) > 1-\alpha.
\end{equation}
Recall that $|\epsilon_k|\le\alpha$, since $\alpha$ is defined to be the maximum allowed error.
Therefore, for any $\phi$ satisfying
\begin{equation}
\label{app_genform:counteexample_condition4}
F(\phi 2^k) < 2\alpha\quad\text{or}\quad F(\phi 2^k) > 1-2\alpha,
\end{equation}
there exists $\epsilon_k$ such that $\phi$ satisfies \cref{app_genform:counteexample_condition3}, i.e., $\phi$ is a possible angle in the counterexample.
Equivalently,
\begin{equation}\label{eqn:frac_tau}
\left| F(\phi 2^k) - \frac{1}{2}\right| > \frac{1}{2} - 2\alpha.
\end{equation}

Now, suppose $\alpha>\frac{1}{6}$.
Then $\frac{1}{6}>\frac{1}{2}-2\alpha$.
It therefore suffices to satisfy
\begin{equation}
\label{app_genform:c12}
\left| F(\phi 2^k) - \frac{1}{2}\right| > \frac{1}{6}
\end{equation}
in order to satisfy \cref{eqn:frac_tau}.
If $\phi=\frac{2^{-j}}{3}$ with $j\leq k$, $F(\phi 2^k)$ is either $\frac{1}{3}$ or $\frac{2}{3}$, both of which satisfy \cref{app_genform:c12}.
For $j<k$, $\phi 2^k<1/3$, and \cref{app_genform:c12} is again also satisfied.
This proves the claim.

\end{document}